\newcommand{\KK}{{\mathbb{K}}}
\newcommand{\NN}{{\mathbb{N}}}
\newcommand{\ZZ}{{\mathbb{Z}}}
\newcommand{\FF}{{\mathbb{F}}}
\begin{document}

\title{A New Primitive for a Diffie-Hellman-like Key Exchange Protocol Based on
  Multivariate Ore Polynomials}

\author{Reinhold Burger and Albert Heinle}

\institute{Symbolic
  Computation Group\\ David R. Cheriton School of Computer Science\\
University of Waterloo, Waterloo, Canada\\
Email: \{rfburger, aheinle\}@uwaterloo.ca
}

\maketitle

\begin{abstract}

In this paper we present a new primitive for a key exchange protocol based on multivariate
non-commutative polynomial rings, analogous to the classic
Diffie-Hellman method. Our technique extends the proposed scheme of
Boucher et al. from 2010. Their method was broken by Dubois and
Kammerer in 2011, who exploited the Euclidean domain structure of the
chosen ring.  However, our proposal is immune against such attacks,
without losing the advantages of non-commutative polynomial rings as
outlined by Boucher et al.  Moreover, our extension is not restricted
to any particular ring, but is designed to allow users to readily
choose from a large class of rings when applying the protocol. Our
primitive can also be applied to other cryptographic paradigms. In
particular, we develop a three-pass
protocol, a public key cryptosystem, a digital signature scheme and a
zero-knowledge proof protocol.


\end{abstract}
\keywords{Cryptography, non-commutative,
  multivariate polynomial rings, skew-polynomials, Ore algebra}

\section{Introduction}
\label{sctn:intro}

In 2010, Boucher et al. \cite{boucher2010key} proposed a novel
Diffie-Hellman-like key exchange protocol \cite{diffie1976new} based on skew-polynomial rings. An outline of
their method can be given as follows (with adapted notation): Two
communicating parties, Alice and Bob, publicly agree on an element $L$ in
a predetermined skew-polynomial ring, and on a subset $\mathcal{S}$ of commuting
elements in this ring. Alice then chooses two private keys $P_A,Q_A$ from
$\mathcal{S}$ and sends Bob the product $P_A\cdot L \cdot Q_A$. Bob
similarly chooses $P_B,Q_B$ from $\mathcal{S}$ and sends Alice
$P_B\cdot L \cdot Q_B$. Alice computes $P_A\cdot P_B\cdot L\cdot
Q_B\cdot Q_A$, while Bob
computes $P_B\cdot P_A \cdot L \cdot Q_A\cdot Q_B$. Since $P_A\cdot P_B =
P_B\cdot P_A$, and $Q_A\cdot Q_B =Q_B\cdot Q_A$, Alice and Bob have
computed the same final element, which can be used as a secret key,
either directly or by hashing. Boucher et al. claimed that it would be
intractable for an eavesdropper, Eve, to compute this secret key with
knowledge only of $L$, $\mathcal{S}$, $P_A\cdot L \cdot Q_A$ and
$P_B\cdot L \cdot Q_B$. They based their claim on the difficulty of
the factorization problem in skew-polynomial rings, in particular the
non-uniqueness of factorizations.

However, in 2011, Dubois and Kammerer exploited the fact that the
concrete skew-polynomial ring chosen by Boucher et al. is a Euclidean
domain to successfully attack their protocol \cite{dubois2011cryptanalysis}. Following their
approach, an eavesdropper Eve chooses a random element $e \in \mathcal{S}$, and
computes the greatest common right divisor of
$P_A\cdot L\cdot Q_A\cdot e = P_A\cdot L\cdot e\cdot Q_A$
with $P_A\cdot L\cdot Q_A$, which is with high probability equal
to $Q_A$. From
this point on, Eve can easily recover the agreed upon key between
Alice and Bob. Moreover, the authors also criticized the suggested
brute-force method for Alice and Bob to generate commuting
polynomials, as most of the commuting polynomials turn out to be central
and thus the possible choices for private keys becomes fairly small.
(An element of a ring is \textbf{central} if it commutes with all
other elements of the ring.)

After Dubois and Kammerer's paper, interest in the application of
non-commutative polynomial rings appears to have dwindled, since to
the best
of our knowledge no
further publications considering non-commutative polynomial rings in
cryptographic contexts have appeared. 

It is our position that such rings can still be used
as a foundation of a secure Diffie-Hellman-like protocol. The basic weakness in
the scheme presented in \cite{boucher2010key} lies in the choice of a
univariate Ore extension as the underlying ring of the protocol, as
these rings are Euclidean domains. However, the construction of
Ore extensions can be iterated, and the resulting multivariate Ore
polynomial rings will no longer be principal ideal domains (and
therefore not Euclidean domains). This would preclude any
attack of the type proposed by Dubois and Kammerer.

The contributions of this paper are the following:

\begin{itemize}
  \item We present a method of constructing non-commutative algebras
    to use in the protocol as
    presented in \cite{boucher2010key}. The creation of algebras in
    this fashion ensures that the
    Diffie-Hellman-like key exchange will not be subject to attacks as described
    in \cite{dubois2011cryptanalysis}. At the same time the desirable 
    properties such as non-uniqueness of the factorization remain
    present, as well as key-generation in a feasible amount of
    time.
  \item For our choice of non-commutative algebras, no polynomial
    time factorization algorithm for their elements is known. For most
    of them,
    there is not even a general factorization technique, i.e. one
    that is applicable to any
    element,
    discovered yet and, as some of them do not even have the
    property of being Noetherian, factorization algorithms may not
    even exist. 
  \item This paper addresses the critique given in
    \cite{dubois2011cryptanalysis} concerning the feasible
    construction of a set $\mathcal{S}$ of commuting, non-central
    elements, from which the two communicating parties choose their private
    keys. We show an efficient way to construct commuting
    polynomials, which is independent from the choice of the algebra.
  \item We have made an experimental implementation of the
    key-exchange protocol using our proposed rings, in order to
    examine the practical feasibility of our primitive. Furthermore,
    we have created challenge problems for the reader who wishes to
    examine the security of our schemes.
  \item Attacks based on the key-choice of $A$ and $B$ are studied for
    a concrete algebra
    and an overview of weak keys and their detection is presented.
  \item We also study the application of multivariate Ore polynomials to other cryptographic
    paradigms: a three-pass protocol, a public key cryptosystem, a
    digital signature scheme and a zero-knowledge proof protocol.
\end{itemize}

In the rest of this section, we will introduce some basic notations
and definitions. Moreover, we present the main ring structures which will be used
for the various cryptographic protocols. Furthermore,
related work and the potential for these rings to be used in
post-quantum cryptosystems will be discussed.

In section \ref{sctn:dhkep}, we show how multivariate Ore polynomial rings can be
applied as a primitive for a Diffie-Hellman-like key
exchange protocol. We will
argue its correctness, its efficiency and its security.

Section \ref{sctn:impl} describes an implementation
of our proposal for a specific choice of the non-commutative
ring. Experimental results are presented for different input-sizes.

In section \ref{sctn:insecurekeys}, we discuss some known insecure
keys for a particular ring, namely the $n^{\mathrm{th}}$ Weyl
algebra. The described construction cannot always be generalized to
other rings,
albeit it provides a guideline for the study of insecure keys in rings that can be used
in cryptographic protocols.

Before our conclusion in section \ref{sctn:conclusion}, we present in
section \ref{sec:Enhancements} applications of multivariate
Ore polynomial rings to
other cryptographic paradigms.

\subsection{Basic Notations and Definitions}
\label{subsec:basic-notations-and-definitions}

Throughout the whole paper, $R$ denotes an arbitrary domain with
identity. For practical reasons we furthermore assume for any introduced ring
that it is
computable, i.e. one can find a finite representation of its
elements, and that all arithmetics can be done in polynomial
time. Also a random choice of an element in the ring $R$ is assumed
to be possible with polynomial costs.

Let us address the basic construction principles of so-called Ore
extensions \cite{ore1933theory}
of $R$. We follow the notions from \cite{Bueso:2003}, which
we also recommend as a resource for a thorough introduction into
the field of algorithmic non-commutative algebra.

\begin{definition}[\cite{Bueso:2003}, Definition 3.1]
\label{def:sigmaDerivation}
Let $\sigma$ be a ring endomorphism of $R$. A \textbf{$\sigma$-derivation} of
$R$ is an additive endomorphism $\delta: R \to R$ with the property
that $\delta(rs) = \sigma(r)\delta(s) + \delta(r)s$ for all $r,s \in
R$. We call the pair $(\sigma, \delta)$ a \textbf{quasi-derivation}.
\end{definition}

\begin{proposition}[\cite{Bueso:2003}, Proposition 3.3.]
  \label{prop:ringExtensionQuasiDerivation}
  Let $(\sigma, \delta)$ be a quasi-derivation on $R$. Then there
  exists a ring $S$ with the following properties:
  \begin{enumerate}
    \item $R$ is a subring of $S$;
    \item there exists an element $\partial \in S$ such that $S$
      is freely
      generated as a left $R$-module by the positive powers
      $1, \partial, \partial^2, \ldots$ of $\partial$;
    \item for every $r \in R$, we have
      $\partial r = \sigma(r)\partial + \delta(r)$.
  \end{enumerate}
\end{proposition}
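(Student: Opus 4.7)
The plan is to construct $S$ explicitly as the free left $R$-module on the symbols $\{1, \partial, \partial^2, \ldots\}$, and then equip it with a multiplication compatible with the prescribed commutation rule. First, I would set $S = \bigoplus_{i \geq 0} R\partial^i$ as a left $R$-module, so that each element of $S$ admits a unique expression as a finite sum $\sum_i r_i\partial^i$. Addition is inherited from the direct sum, and properties (1) and (2) of the statement are automatic from this description once a ring structure is in place. The entire content of the proposition then reduces to producing an associative, $\mathbb{Z}$-bilinear multiplication on $S$ that restricts to the given multiplication on the degree-zero component and satisfies $\partial r = \sigma(r)\partial + \delta(r)$.

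To define the multiplication, I would prescribe $\partial \cdot r := \sigma(r)\partial + \delta(r)$ for $r \in R$, extend inductively by $\partial^n \cdot r := \partial(\partial^{n-1}r)$, and then distribute: $\bigl(\sum_i r_i\partial^i\bigr)\bigl(\sum_j s_j\partial^j\bigr) := \sum_{i,j} r_i(\partial^i s_j)\partial^j$. The delicate point is that the rule $\partial r = \sigma(r)\partial + \delta(r)$ must be consistent with the multiplicative structure of $R$, that is, one must have $\partial(rs) = (\partial r)s$ for all $r, s \in R$. Computing both sides yields
\begin{align*}
\partial(rs) &= \sigma(rs)\partial + \delta(rs) = \sigma(r)\sigma(s)\partial + \delta(rs),\\
(\partial r)s &= \bigl(\sigma(r)\partial + \delta(r)\bigr)s = \sigma(r)\sigma(s)\partial + \sigma(r)\delta(s) + \delta(r)s,
\end{align*}
and these agree precisely because the $\sigma$-derivation identity $\delta(rs) = \sigma(r)\delta(s) + \delta(r)s$ from Definition~\ref{def:sigmaDerivation} holds. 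This is exactly where the quasi-derivation hypothesis earns its keep.

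The main obstacle will be upgrading this one consistency check to full associativity of a three-fold product on $S$. The direct approach is an induction on the total $\partial$-degree: the base case is the identity above, and the inductive step repeatedly invokes distributivity and the same identity applied to intermediate coefficients, accumulating terms that rearrange via $\sigma$ and $\delta$. A cleaner route that I would likely prefer is to realize $S$ inside $\mathrm{End}_{\mathbb{Z}}(R[t])$: let $R$ act on $R[t]$ (viewed only as an abelian group) by left multiplication, and let $\partial$ act as the unique additive endomorphism determined by $\partial(r t^n) := \sigma(r)t^{n+1} + \delta(r)t^n$. Associativity is then inherited from the ambient endomorphism ring, and it remains only to verify that the subring generated by the image of $R$ and by $\partial$ is, as a left $R$-module, free on the powers of $\partial$, which follows by tracking leading $t$-degrees.

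Once associativity is in hand, the three listed properties follow immediately. The map $r \mapsto r \cdot 1$ embeds $R$ as a subring because the multiplication on $S$ restricts on the degree-zero component to the original multiplication of $R$; freeness of $\{1, \partial, \partial^2, \ldots\}$ as a left $R$-basis is built into the direct sum definition of $S$; and property (3) is nothing but the defining commutation relation, written as the degree-one case of the multiplication rule.
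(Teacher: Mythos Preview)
The paper does not give its own proof of this proposition; it is quoted verbatim as background from \cite{Bueso:2003} and used only to introduce the Ore extension $R[\partial;\sigma,\delta]$ in Definition~\ref{def:oreExtension}. There is therefore nothing in the paper to compare your argument against.

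That said, your proposal is correct and is exactly the standard construction one finds in the literature (including, in essence, in \cite{Bueso:2003}). Building $S$ as the free left $R$-module $\bigoplus_{i\ge 0} R\partial^i$, imposing $\partial r = \sigma(r)\partial + \delta(r)$, and reducing the consistency check to the $\sigma$-derivation identity is the canonical route. Your alternative of realising $S$ inside $\mathrm{End}_{\mathbb{Z}}(R[t])$ via $\partial(rt^n) = \sigma(r)t^{n+1} + \delta(r)t^n$ is also standard and, as you note, sidesteps the bookkeeping of a direct associativity induction; freeness then follows from the leading-term argument you sketch (using $\partial^n(1) = t^n + \text{lower terms}$, which relies on $\sigma(1)=1$ and $\delta(1)=0$). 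Either approach is complete.
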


\begin{definition}[cf. \cite{Bueso:2003}, Definition 3.4]
  \label{def:oreExtension}
  The ring $S$ defined by the previous result and denoted by
    $R[\partial\,;
    \sigma, \delta]$ is usually referred to as an \textbf{Ore
    extension of $R$}.
\end{definition}

\textbf{General Assumption:} As we want the Ore extension to have at least the property of being a
domain, we assume from now on that $\sigma$ is always injective (compare
\cite{Bueso:2003}, Proposition 3.10). In order to keep the costs of
arithmetics in $R[\partial\,;\sigma,\delta]$ polynomial, we make
two additional assumptions:
\begin{itemize}
\item[(i)] There exist polynomial time algorithms to compute
$\sigma(r)$ and $\delta(r)$ for any given $r \in R$.
\item[(ii)] Either $\sigma$ is the identity map, or $\delta$ is the
  zero map.
\end{itemize}
While item (i) seems to be a natural assumption, item (ii) may seem highly
restrictive. But these cases cover several algebras that are studied
in practice, as we will point out in the examples below. The need for
this condition comes from the result of the following lemma, which can
be easily proven by induction on $n \in \NN$.

\begin{lemma}
\label{lem:multCoeffWithMonom}
  Let $R[\partial\,;\sigma, \delta]$ be an Ore extension of $R$,
  and let $r$ be
  an arbitrary element in $R$. Then we have the following identity for
  $n \in \NN$:
  \begin{eqnarray*}
  \partial^{n}r &=& \sigma^n(r)\partial^{n}
             + \Biggl( \sum_{\theta \in S_n \bullet
             [\underbrace{\sigma,\ldots,\sigma}_{n-1 \text{\ }times},
             \delta]}\theta_1 \circ \ldots \circ \theta_n \circ r \Biggr)
             \partial^{n-1}+\ldots\\
                 & & +\Biggl( \sum_{\theta \in S_n \bullet
                   [\sigma,\delta,\ldots,\delta]}\theta_1 \circ \ldots
                 \circ \theta_n \circ r \Biggr)\partial + \delta^n(r),
  \end{eqnarray*}
where $S_n$ denotes the permutation group on $n$ elements and $\bullet$
the canonical action of the group on a list with $n$ elements.
\end{lemma}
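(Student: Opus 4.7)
The plan is to prove the identity by induction on $n$. For the base case $n=1$, the claim collapses to $\partial r = \sigma(r)\partial + \delta(r)$, which is precisely property (3) of Proposition~\ref{prop:ringExtensionQuasiDerivation}.

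For the inductive step, I would introduce the shorthand $c_{n,k}(r) := \sum_{\theta \in P_{n,k}} \theta_1 \circ \cdots \circ \theta_n \circ r$, where $P_{n,k}$ denotes the orbit under $S_n$ of any list containing $k$ copies of $\sigma$ and $n-k$ copies of $\delta$. In this notation the lemma reads $\partial^n r = \sum_{k=0}^{n} c_{n,k}(r)\,\partial^k$, with the conventions $c_{n,n}(r)=\sigma^n(r)$ and $c_{n,0}(r)=\delta^n(r)$. Assuming this identity for $n$, I left-multiply by $\partial$ and apply the defining commutation rule once to each summand:
\begin{equation*}
\partial^{n+1} r \;=\; \sum_{k=0}^{n} \partial \cdot c_{n,k}(r)\,\partial^{k} \;=\; \sum_{k=0}^{n} \bigl(\sigma(c_{n,k}(r))\,\partial + \delta(c_{n,k}(r))\bigr)\,\partial^{k}.
\end{equation*}
Collecting equal powers of $\partial$, the coefficient of $\partial^{j}$ in $\partial^{n+1} r$ is therefore $\sigma(c_{n,j-1}(r)) + \delta(c_{n,j}(r))$, with the usual convention that $c_{n,k}$ vanishes for $k<0$ or $k>n$.

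The remaining step is a combinatorial identification. Every list in $P_{n+1,j}$ begins either with $\sigma$, in which case its remaining $n$ entries form an element of $P_{n,j-1}$, or with $\delta$, in which case the remainder belongs to $P_{n,j}$. Since the leading operator factors out of each composition $\theta_1 \circ \cdots \circ \theta_{n+1} \circ r$ and $\sigma,\delta$ are additive, summing over these two disjoint cases yields
\begin{equation*}
c_{n+1,j}(r) \;=\; \sigma(c_{n,j-1}(r)) + \delta(c_{n,j}(r)),
\end{equation*}
which matches the coefficient computed above and closes the induction. The only real obstacle is notational bookkeeping around the permutation sums; the algebraic content is entirely the one-step rule $\partial s = \sigma(s)\partial + \delta(s)$ applied termwise, and the General Assumption restricting to $\sigma=\mathrm{id}$ or $\delta=0$ is not needed for the identity itself (it only serves to keep later arithmetic polynomial-time by collapsing most cross terms).
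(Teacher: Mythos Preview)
Your proposal is correct and follows exactly the approach the paper indicates, namely a straightforward induction on $n$; the paper itself does not spell out the details beyond remarking that the lemma ``can be easily proven by induction on $n \in \NN$,'' and your argument fills in precisely those details.
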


Without item (ii), when
multiplying elements in $S$, we would have to compute up to $2^n$
images of an
element $r \in R$ resulting from all different ways of applying $n$
times functions chosen from the set $\{\sigma, \delta\}$. This is avoided by choosing one of
the maps to be trivial, i.e. by our assumption (ii).

\begin{example}
For the Ore extensions considered in the paper \cite{boucher2010key},
the authors assumed $R$ to be a finite
field, $\sigma$ to be a power of the Frobenius automorphism on $R$ and $\delta$ to be the zero map.
\end{example}

\begin{example}
  The construction of a commutative polynomial ring over a given ring
  $R$ can be viewed as an Ore extension by choosing $\sigma$ to be the
  identity map and $\delta$ to be the zero map.
\end{example}

\begin{remark}
  If we choose $\sigma$ not to be an automorphism, then our
  constructed ring is not necessarily Noetherian, which makes the general
  factorization problem even harder to solve. An example
  of a non-Noetherian Ore extension is the following:\\
Let $\KK$ be a field. Set $R :=
\KK[y]$, the univariate polynomial ring over $\KK$. Define
$\sigma: R \to R, f(y) \mapsto f(y^2)$ and set $\delta$ to be the zero
map. Then
$(\sigma, \delta)$ is a quasi-derivation, and the ring
$R[\partial\,; \sigma, \delta]$ is not Noetherian.
A proof of this, and a more thorough
discussion, can be found in \cite{mcconnell2001noncommutative},
section 1.3.2.
\end{remark}

The process of building an Ore extension can be
iterated. The rings that we propose to use for a
key exchange protocol 
are of the form
\begin{align}
\label{eq:ourRings}
S:= R[\partial_1;\sigma_1,\delta_1] [\partial_2;\sigma_2,\delta_2]
\ldots [\partial_n;\sigma_n,\delta_n],
\end{align}
where $\NN \ni n > 1$, $R$ is a domain with identity element, and for all $i \in
\{1, \ldots, n\}$, either $\sigma_i$ is the identity map, or $\delta_i
$ is the zero map, according to our general assumptions. We will refer
to these rings throughout the paper as \textbf{rings of type
  (\ref{eq:ourRings})}. Furthermore, the $\sigma_i$ are
injective, and there exists a subring $\tilde R \neq \{0\}$ of the
center of $R$, such
that for all $i \in \{1,\ldots,n\}$ and $r \in \tilde R: \sigma_i(r)
= r$ and $\delta_i(r) = 0$. We refer to $\tilde R$ as the \textbf{subring
of constants}.

The condition $n>1$ gives us the property that our ring is neither a
left nor a right
principal ideal domain and therefore there exists no notion of a left-
or right greatest common divisor. Thus, our construction of a
Diffie-Hellman-like key exchange protocol would not be vulnerable to
the methods introduced in \cite{dubois2011cryptanalysis}.

The condition that $\tilde R$, whose elements are not subject to any
non-commutative relation,  exists, is needed later to construct commutative
subrings in rings of type (\ref{eq:ourRings}).

There will be two kinds of rings of type (\ref{eq:ourRings}) that will serve as model examples
throughout the paper:

\begin{definition}
  \label{def:weyl}
  The so-called \textbf{\boldmath$n^{\mathrm{th}}$ Weyl algebra $A_n$}
  is an Ore extension of type
  (\ref{eq:ourRings}). For this, let $\tilde \KK$ be an arbitrary field,
  and
  $\KK := \tilde \KK(x_1,\ldots,x_n)$. Define for all $i \in
  \{1,\ldots, n\}$ the
  $\sigma_i$ to be identity maps, and define $\delta_i$ to be the
  partial derivation with respect to $x_i$.
  Thus, $\partial_{i}x_{i} = \sigma_{i}(x_{i})\partial_{i}
   + \delta_{i}(x_{i}) = x_{i}\partial_{i} + x_{i}'
   = x_{i}\partial_{i} + 1$.  But $\partial_{i}$ commutes with all
   other $x_{j}$, where $j\ne i$.  Also, $\partial_{i}$, $\partial_{j}$
   always commute, as do $x_{i}$, $x_{j}$. Finally $\tilde \KK$ is the
   subring of constants of $S$.
\end{definition}

\begin{example}
  \label{ex:boucherExtended}
  The rings used in \cite{boucher2010key}, namely
  $\mathbb{F}_q[\partial\,;\sigma]$ with
  $\sigma \in \mathrm{Aut}(\mathbb{F}_q)$, are single Ore
  extensions of a finite field $\mathbb{F}_q$, where $q$ is some positive power
  of a prime number $p$. One can iterate the extension of this ring,
  and create
  $\mathbb{F}_q[\partial_1;\sigma_1] \ldots[\partial_n ; \sigma_n]$,
  where $\sigma_1, \ldots, \sigma_n \in \mathrm{Aut}(\mathbb{F}_q)$.
  (Thus, if $c\in \mathbb{F}_q$, then
  $\partial_{i}c = \sigma_{i}(c)\partial_{i}$.  But $\partial_{i}$,
  $\partial_{j}$ commute, for all $i,j$.)
\end{example}

The factorization properties of a ring $S$ that has the form
(\ref{eq:ourRings})  are
different from commutative multivariate polynomial rings. In
particular, $S$ is not a unique factorization domain in the classical
sense, i.e. factors are not just unique up to permutation and
multiplication by units. Factors in $S$ are unique up to the following
notion of similarity.

\begin{definition}[cf. \cite{Bueso:2003}, Definition 4.9]
\label{definition:similarity}
  Let $R$ be a domain. Two elements $r,s\in R$ are said to be
  \textbf{similar}, if $R/Rr$ and $R/Rs$ are isomorphic as left $R$-modules.
\end{definition}

In general, given an element $p \in S$, which has two different
complete factorizations
\[
p = p_1 \cdots p_m = \tilde p_1 \cdots \tilde p_{k},
\]
where $m, k \in \NN$, there exists
for every $i \in \{1, \ldots, m\}$ at least one $j \in \{1,\ldots, k\}$
such that $p_i$ is similar to $\tilde p_j$
(cf. \cite{nathan1943theory}, \cite{ore1933theory}). This means, that
the position of similar elements in different factorizations is not fixed.

\begin{example}
  Let us state  a simple example in $A_2$, that can be found in
  \cite{landau1902satz}:
  \[
  h := (\partial_1+1)^2(\partial_1 + x_1\partial_2) \in A_2.
  \]
  Besides the given factorization in the definition of $h$, we have
  the following decomposition into irreducible elements:
  $$h =
  (x_1\partial_1\partial_2+\partial_1^2+x_1\partial_2+\partial_1+2\partial_2)(\partial_1+1).$$
\end{example}

The corresponding decision problem, of deciding whether two given
polynomials
are similar, is not known to be possible in polynomial time to the best of our
knowledge, although attempts have been made \cite{cha2010solving}.

One possible attack is to factor elements in non-commutative
polynomial rings. However, besides the fact that factoring is
currently intractable in this setup, the non-uniqueness of the factorization
adds another difficulty for an attack based on factorization. As the next example illustrates, one might end up with
infinitely many factorizations, from which one has to choose the
correct one.

\begin{example}
Let $\KK$ in Definition \ref{def:weyl} be of characteristic
zero and consider
$\partial_1^2 \in A_n$. Besides factoring into
$\partial_1\cdot \partial_1$, it also factors for all $c \in \KK$ into
$$\left( \partial_1 + \frac{1}{x_1+c}\right)
\cdot \left( \partial_1 - \frac{1}{x_1+c} \right).$$
\end{example}

\begin{remark}
  In Algorithm \ref{alg:keyExchange}, we are exclusively using the multiplicative structure of
  the ring $S$. Only the generation of random elements causes us to
  apply addition. It is to be emphasized, that $(S,\cdot)$, i.e. the
  elements in $S$ equipped with the multiplication operator, do not
  form a group but a monoid. This is due to the fact that the
  variables introduced by the Ore extension (i.e. $\partial_1, \ldots,
  \partial_n$) do not have a multiplicative inverse.
\end{remark}

\subsection{Potential as a Post-Quantum Cryptosystem}
\label{subsec:potential-as-a-post-quantum-ryptosystem}
Here we will try to give some plausible grounds for our conjecture,
that the factorization problem for our rings cannot be solved
in polynomial time, even with quantum algorithms.

This stems from the observation that factors of an Ore polynomial
$p$ can be very large compared with $p$ itself.  Indeed, in terms
of bit length representations, the size of the factors can be
exponential in the size of $p$.  For example, consider the
Chebyshev differential operator
\begin{equation}  \label{eq:chebyshev_L}
  L := \left(1-x_{1}^{2}\right) \partial_{1}^{2}
       - x_{1} \partial_{1} + n^{2},
\end{equation}
 as an element in $A_1$, where $n$ is a real constant.  When $n$ is a positive integer,
one can show that $L$ has two possible factorizations.
Furthermore, when $n$ is prime one can show that these factors will
contain $\Omega (n)$
non-zero terms in expanded representation.  Thus their bit size grows at least as fast
as $n$,
while the size
of $L$ itself grows only as $\log n$.  Consequently, the sizes of the factors
are exponentially
larger than the size of $L$.  If the reader wishes to try some
experimentation in \textsc{Maple}, we provide a code snippet in
appendix \ref{app:Chebyshev}.

Now, for the decision problem, ``Is $L$ factorable?", the obvious
certificate for verification of a ``yes" answer would be an
actual pair of factors of $L$.  But as we can see from this example,
the size of such a
certificate may not be polynomial in the input size of $L$.
Furthermore, this problem is already occurring for the simplest
possible case: a second-order operator in a univariate Ore ring.
Our proposal works with much higher-order operators over many
variables, so the relative size of such a certificate of factors
will not improve, and may even become worse.
 
Of course, this does not prove that a polynomial-sized certificate
could not exist.  But we do not know of any, and hence we
suspect that this problem may not even belong to the class $NP$.
As there is some thought that NP-complete problems would not have
polynomial time quantum algorithms (see e.g. \cite{brown2001quest},
page 297), we are therefore led to conjecture that
our factorization problem would not have any such algorithm, either.
 
Note, though, that the above example was over a field of
characteristic zero.  We actually prefer to work over finite fields,
to reduce expression swell in the computations. For such fields, we do
not know of any examples where the bit-size of the factors is
exponentially larger than the input-polynomial. However, even for
univariate differential polynomials, there are no known polynomial
time algorithms for factorization or deciding irreducibility, for  either
classical or quantum computers.
For Noetherian rings over finite fields,
the hardness of factorization is
less clear, though there are still no known polynomial time
algorithms for the multivariate case.  But even if one exists
for Noetherian rings over finite fields, one
could instead choose a ring having a non-Noetherian extension.
As mentioned above, we are skeptical that there is any
polynomial time algorithm for this case, using a classical computer.
Furthermore, unless there is some property of the non-Noetherian ring
that a quantum algorithm
can take advantage of, we conjecture the same is true for
quantum computers.



\subsection{Related Work}


There exist a polynomial time algorithms that factor
univariate Ore-polynomials over finite fields, namely
\cite{giesbrecht1998factoring,giesbrecht2003factoring}. This includes the skew-polynomials as
used in \cite{boucher2010key}. Boucher
et al. argued that even if an attacker can find a factorization using
this algorithm, then it might not be the right one to discover the key
$A$ and $B$ agreed upon. This can be true for certain choices of
polynomials, but there is more theory needed to prove that there is a
certain lower bound on the number of different factorizations.

For certain single Ore extensions of a univariate commutative
polynomial ring or function
field there are several algorithms and even implementations
available. This is due to the fact that those extensions are algebraic
generalizations of operator algebras. The most prominent publications that
deal with factoring in the first Weyl algebra are
\cite{van1997formal,Hoeij:1996,Hoeij:1997,Hoeij:2010,GrigorievSchwartz:2004}; the algorithms of the first four papers are
implemented in the computer algebra system
\textsc{Maple} \cite{Maple}, and that of the fifth paper in
\textsc{ALLTYPES} \cite{Schwarz:2009}. For factoring elements in the
first Weyl algebra with
polynomial coefficients, there is an implementation
\cite{heinle2013factorization} in the
computer algebra system \textsc{Singular} \cite{Singular:2012}. The
implementation also extends to the shift algebra and classes of
polynomials in the so-called first $\underline{q}$-Weyl algebra.
Theoretical results for those operator algebras are shown in
\cite{Tsarev:1994} and \cite{Tsarev:1996},  which extend the papers
\cite{Loewy:1903} and \cite{Loewy:1906}.

The factorization problem for general multivariate Ore algebras has
not, as of yet, been as well investigated.

A thorough theoretic overview of the factorization problem in Ore
domains is presented in \cite{Bueso:2003}.

Recently, the
techniques from \cite{heinle2013factorization} were extended to factor elements
in the $n^{\mathrm{th}}$ Weyl
algebra, the $n^{\mathrm{th}}$ shift algebra and classes of polynomials in the
$n^{\mathrm{th}}$ $\underline{q}$-Weyl algebra \cite{giesbrecht2014factoring}.
However, the
algorithm uses Gr\"obner-bases \cite{Buchberger:1997}, and therefore does not run
in polynomial time \cite{mayr1982complexity}.

From an algebraic point of view, and dealing only with strictly
polynomial non-commutative algebras, Melenk and Apel \cite{Melenk:1994}
developed a package for the computer algebra system
\textsc{REDUCE}. That package provides tools to deal with certain
non-commutative polynomial algebras and also contains a factorization
algorithm for the supported algebras.

Beals and Kartashova
\cite{beals2005constructively} consider the problem of factoring
polynomials in the second Weyl algebra, where they are able to deduce
parametric factors. Research in a similar direction was done by
Shemyakova in
\cite{parameters:shemyakova:2007,shemyakova:multfacts1:2009,2010:shemyakova:refinement}.


Another key exchange protocol based on non-commutative rings is
presented in \cite{climent2012key}. The ring chosen in this
publication is the ring of endomorphisms of $\ZZ_p \times \ZZ_{p^2}$,
which is also not a principal ideal domain and therefore not subject
to an attack as shown in \cite{dubois2011cryptanalysis}. It should be
noted that the authors used the same technique as in this paper
for constructing commuting elements.

Using a different non-commutative ring, which is a $\ZZ$-module, Cao et al. have
presented a similar key-exchange protocol in \cite{cao2007new}. The
authors also use the same idea to construct commuting elements, but
their work furthermore considers some other non-abelian groups. We are
not aware of any known attack on this system.

An approach for using non-abelian groups to generate a public key cryptosystem was
developed by Ko et al. in \cite{ko2000new} for the special case of
braid groups \cite{artin1947theory}. The authors used the conjugacy
problem of groups as their hard problem. However, Jun and Cheon
presented in \cite{cheon2003polynomial} a polynomial time algorithm
for exactly their setup (but not for the conjugacy problem in
general). This attack exploits the Lawrence-Krammer representation of
braid groups \cite{krammer2002braid}, which is a linear representation
of the braid group.


Concerning the task of finding commuting polynomials in the first Weyl
algebra, a very thorough study is presented in
\cite{burchnall1923commutative}, which also demonstrates the hardness
to find all commuting polynomials in the ring of ordinary linear differential operators.

\section{The Key Exchange Protocol}
\label{sctn:dhkep}

\subsection{Description of the Protocol}
\label{subsec:description-of-the-protocol}

We refer to our communicating parties as Alice (abbreviated $A$) and Bob
(abbreviated $B$). Alice and Bob wish to agree on a common secret
key using a Diffie-Hellmann-like cryptosystem.

The main idea is similar to the key exchange protocol presented in
\cite{boucher2010key}. The main differences are that (i) the ring $S$ and (ii) the commuting
subsets are not fixed, but agreed upon as part of the key-exchange protocol. It is summarized by the
following algorithm.

\begin{algorithm}[H]
  \begin{algorithmic}[1]
    \STATE $A$ and $B$ publicly agree on a ring $S$ of type
    (\ref{eq:ourRings}), a security parameter $\nu \in \NN$ representing the
    size of the elements to be picked from $S$ in terms of total
    degree and coefficients, a non-central element $L \in S$, and two multiplicatively
    closed, commutative subsets of $\mathcal{C}_l,\mathcal{C}_r
    \subset S$, whose elements do not commute
    with $L$. \label{stp:public}
    \STATE $A$ chooses a tuple $(P_A,Q_A) \in \mathcal{C}_{l}
    \times \mathcal{C}_{r}$.\label{stp:secretA}
    \STATE $B$ chooses a tuple $(P_B,Q_B) \in \mathcal{C}_{l}
    \times \mathcal{C}_{r}$. \label{stp:secretB}
    \STATE $A$ sends the product $A_{\text{part}} := P_A \cdot L \cdot
    Q_A$ to $B$.
    \STATE $B$ sends the product $B_{\text{part}} := P_B \cdot
    L \cdot Q_B$ to $A$.
    \STATE $A$ computes $P_A \cdot B_{\text{part}} \cdot Q_A$.
    \STATE $B$ computes $P_B \cdot A_{\text{part}} \cdot Q_B$.
    \STATE $P_A \cdot P_B \cdot
    L \cdot Q_B \cdot Q_A = P_B \cdot
    P_A \cdot L \cdot Q_A \cdot Q_B$ is the shared secret key of $A$
    and $B$.\label{stp:finalAlg1}
  \end{algorithmic}
\caption{DH-like protocol with rings of type (\ref{eq:ourRings})}
  \label{alg:keyExchange}
\end{algorithm}

\begin{proof}[Correctness of Algorithm \ref{alg:keyExchange}]
  As $P_A, P_B \in \mathcal{C}_l$ and $Q_A, Q_B\in \mathcal{C}_r$, we
  have the identity in step \ref{stp:finalAlg1}. Therefore, by the end
  of the key exchange, both $A$ and $B$ are in possession of the same
  secret key.
\end{proof}

Before discussing the complexity and security of the proposed scheme, we
deal with the feasibility of constructing the sets $\mathcal{C}_{l}, \mathcal{C}_{r}$ in
Algorithm \ref{alg:keyExchange}. We propose the following technique,
which is applicable independent of the choice of $S$. Let $P, Q \in S$
be chosen, such that they do not commute 
with $L$. Define
\begin{eqnarray}
C_l &:=& \left\{f(P) \mid f = \sum_{i=0}^m f_i X^i \in \tilde R[X], m \in
\NN, f_0 \neq 
0\right\}, \nonumber \\
C_r &:=& \left\{f(Q) \mid f = \sum_{i=0}^m f_i X^i \in \tilde R[X], m \in \NN, f_0 \neq 0\right\}, \label{eq:commutingSubsets}
\end{eqnarray}
where $\tilde R$ is the subring of constants of $S$, and $\tilde R[X]$ is
the univariate commutative polynomial ring over $\tilde R$. For an
element $f \in \tilde R[X]$, we let $f(P)$ denote the
substitution of $X$ in the terms of $f$ by $P$, and similarly
$f(Q)$ denotes the substitution of $X$ by $Q$. By this
construction, all the elements in $\mathcal{C}_l$ commute,
as do the elements in $\mathcal{C}_r$. The choice of the coefficient $f_0$ in both sets to be non-zero is motivated by the
following fact: If $f_0$ is allowed to be zero, Eve could find
that out by simply trying to divide the resulting
polynomial by $P$ on the left (resp. by $Q$ on the right). Moreover, Eve could iterate this process for
increasing indices, until an $f_i$ for $i \in \{0,\ldots, m\}$ is
reached, which is not equal to zero. This could lead to a decrease of
the amount of coefficients Eve has to figure out for certain choices
of keys. By the
additional condition of
having $f_0\neq 0$, Eve cannot retrieve any further information in the
described way.

Using this technique, the first steps of Algorithm
\ref{alg:keyExchange} can be altered in the following way. In step
\ref{stp:public}, $A$ and $B$ agree upon two elements $P,Q \in S$,
which represent $C_l$ and $C_r$, respectively, as in
(\ref{eq:commutingSubsets}). Then each one of them chooses two random
polynomials $f,g$ in
$\tilde R[X]$, and obtains the tuple of secret keys
in steps \ref{stp:secretA} and \ref{stp:secretB}
by computing $f(P)$ and $g(Q)$.  (Note that it could happen that
one, or even both, of $f(P)$, $g(Q)$ commutes with $L$, even though
neither $P$ nor $Q$ do so.  This appears to be unlikely in practice,
but in any case, it is straightforward to deal with this possibility.
$A$ (resp. $B$)
simply checks if the chosen $f_{A}(P)$ or $g_{A}(Q)$ (resp.
$f_{B}(P)$ or $g_{B}(Q)$) commutes with $L$.  If a commutation
with $L$ should occur, say for $A$, $A$ just chooses a new polynomial for $f_{A}(X)$
or $g_{A}(X)$.   As $P$ and $Q$ are chosen to not commute with $L$, $A$ will quickly find
polynomials $f_{A}(P)$ and $g_{A}(Q)$ that do not commute with $L$.)

\begin{example}
  \label{ex:ProtocolOnWeyl}
 Let $S$ be the third Weyl algebra $A_3$ over the finite field $\mathbb{F}_{71}$, upon which $A$ and $B$
 agree. Let
\begin{eqnarray*}
L &:=&3x_2^2 - 5\partial_2^2 - x_2\partial_3 - x_3 - \partial_2,\\
P &:=&-5x_3^2 - 2x_1\partial_3 + 34, \text{\ \ and }\\
Q &:=&x_2^2 + x_1x_3 - \partial_3^2 + \partial_3,
\end{eqnarray*}
where $L$ is the public polynomial as required in Algorithm 1, and
$P,Q$, such that they define the sets $\mathcal{C}_l$ and $\mathcal{C}_r$ as
in (\ref{eq:commutingSubsets}).

Suppose $A$ chooses polynomials $f_{A}(X) = 48X^2 + 22X + 27$,
$g_{A}(X) = 58X^2 + 5X + 52$, while $B$ chooses
$f_{B}(X) = 3X^2 + X + 31$, $g_{B}(X) = 24X^2 + 4X + 11$.
Then the tuples are 
$(P_A,\, Q_A) = (48P^2 + 22P + 27, \, \, 58Q^2 + 5Q + 52)$,
and $(P_B,\, Q_B) = (3P^2 + P + 31, \, \, 24Q^2 + 4Q + 11)$.

As described in the protocol, $A$ subsequently sends the product
$A_{\text{part}} := P_A \cdot L \cdot Q_A$ to $B$, while $B$ sends
$B_{\text{part}} := P_B \cdot  L \cdot Q_B$ to $A$, and their
secret key is
$P_A \cdot P_B \cdot
    L \cdot Q_B \cdot Q_A = P_B \cdot
    P_A \cdot L \cdot Q_A \cdot Q_B$. (For brevity, the
    final expanded product is not shown here.)

\end{example}

\begin{remark}
  For practical purposes, the degree of $L$ should be chosen to be of a
  sufficiently large degree in order to perturb the product $Q_B\cdot
  Q_A$ well enough before it is multiplied to $P_A\cdot P_B$. An
  examination of the best choices for the degree of $L$ is a subject of
  future work that includes practical applications of our primitive
  for a Diffie-Hellman-like key exchange protocol.
\end{remark}

\begin{remark}
As we will see in section \ref{sctn:insecurekeys}, there are known
insecure choices of keys for certain rings $S$. Obviously, in a
practical implementation, one has to check for these and avoid them.
\end{remark}

\subsection{Complexity of the Protocol}

Of course, as our definition of the rings we consider in  Algorithm
\ref{alg:keyExchange} --
namely rings of type (\ref{eq:ourRings}) -- is chosen to be as general
as possible, a complexity discussion is highly dependent on the choice
of the specific algebra. In practice, we envision that a certain
finite subset
of those algebras (such as, for example, the Weyl algebras, or
iterated extensions
of the rings used in \cite{boucher2010key}) will be studied for
practical applications. Our complexity discussion here
focusses rather on the general setup than on concrete examples.

As we generally assume, all arithmetics in $R$, and therefore also in
its subring of constants $\tilde R$, can be computed in polynomial
time. We suppose the same holds for the application of $\sigma_i$
and $\delta_i$,
for $i \in \{1, \ldots, n\}$, to the elements of $R$, and that the
time needed to choose a random element in $R$ is polynomial in the
desired bit length of this random element. Thus, the choice of
a random element in $S$
is just a finitely iterated application of the choice of coefficients,
which lie in $R$. Let $\omega_i(k)$ denote the cost of applying $\sigma_i$
(or $\delta_i$, depending on which one of them is non-trivial) to an
element $f \in R$ of bit-length $k \in \NN$. For two elements $f,g \in R$ of
bit-sizes $k_1,k_2 \in \NN$, we denote the cost of multiplying them in
$R$ by $\theta(k_1,k_2)$, and the cost of adding them by $\rho(k_1,k_2)$.

For the key exchange protocol the main cost that we need
to address is the cost of multiplying two polynomials in $S$. For a
monomial $m := \partial_1^{e_1} \cdots \partial_n^{e_n}$, where $e \in
\NN_0^n$, one can generalize Lemma \ref{lem:multCoeffWithMonom} to the
multivariate case and
find that multiplying $m$ and $f$, where $f$ has bit-size $k$,
costs $O(\prod_{i=1}^ne_i \cdot \omega_i(k))$ bit-operations. For
general polynomials in $S$, we obtain therefore the following
property:

\begin{lemma}
\label{lem:complexityMult}
Let $n$ be the number of Ore extensions as in (\ref{eq:ourRings}).
For two polynomials $h_1, h_2 \in S$, let $d \in \NN_0$ be the maximal
degree among the $\partial_i$ that appears in $h_1$ and $h_2$,
and let $k_1,k_2\in
\NN$ be the maximal bit-length among the coefficients of $h_1$
and $h_2$, respectively. For notational convenience,
we define $\zeta := \prod_{i=1}^{n} \omega_i(k_2)$. Then the
cost of computing the product $h_1 \cdot h_2$ is in 
\[
O\left(d^{2n}\cdot
\zeta \cdot \theta(k_1,\zeta)
\cdot \rho(\theta(k_1,\zeta),\theta(k_1,\zeta))\right).
\]
\end{lemma}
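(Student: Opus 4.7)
The plan is to decompose both $h_1$ and $h_2$ into their monomial representations and bound the cost of the naive distributive expansion of $h_1\cdot h_2$, summing over all pairs of monomials. First, observe that by the degree bound $d$ in each variable $\partial_i$, each of $h_1$ and $h_2$ has at most $(d+1)^n = O(d^n)$ nonzero monomials, so the number of monomial-monomial products to evaluate is at most $O(d^n)\cdot O(d^n) = O(d^{2n})$. This accounts for the leading $d^{2n}$ factor in the claimed bound.

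Next, I would bound the cost of a single monomial-monomial product of the form $c_1\partial_1^{a_1}\cdots\partial_n^{a_n} \cdot c_2\partial_1^{b_1}\cdots\partial_n^{b_n}$, where $c_1, c_2\in R$ have bit-lengths at most $k_1$ and $k_2$, respectively. The essential step is the rewriting of $\partial_1^{a_1}\cdots\partial_n^{a_n}\cdot c_2$ into the standard form in which all coefficients stand to the left of monomials in the $\partial_i$. This is carried out by iterating the multivariate analogue of Lemma \ref{lem:multCoeffWithMonom}: for each index $i$, one commutes $\partial_i^{a_i}$ past the current coefficients using either the pure-$\sigma_i$ form (when $\delta_i = 0$) or the pure-$\delta_i$ form (when $\sigma_i$ is the identity), as permitted by Assumption (ii). The aggregate cost of generating the new coefficients, together with the bit-length blow-up incurred along the way, is absorbed by $\zeta = \prod_{i=1}^n \omega_i(k_2)$. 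Multiplying each resulting coefficient on the left by $c_1$ costs $\theta(k_1,\zeta)$, and accumulating the contribution into the running sum for the matching output monomial costs $\rho(\theta(k_1,\zeta),\theta(k_1,\zeta))$, which gives a per-pair cost of $O\bigl(\zeta\cdot \theta(k_1,\zeta)\cdot \rho(\theta(k_1,\zeta),\theta(k_1,\zeta))\bigr)$.

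The main obstacle is controlling the size explosion during the iterated commutation, and this is exactly where Assumption (ii) pays off: because only one of $\sigma_i$ and $\delta_i$ is non-trivial for each $i$, the multivariate version of Lemma \ref{lem:multCoeffWithMonom} does not branch into the worst-case $2^{a_1+\cdots+a_n}$ summation warned about in the remark following that lemma, so the $\zeta$ bound remains polynomial under our standing assumption that each $\omega_i$ is polynomial-time. Multiplying the $O(d^{2n})$ monomial-pair count by the per-pair cost derived above then yields the claimed overall bound. To complete the argument I would verify by induction on $n$ (using the single-variable case in Lemma \ref{lem:multCoeffWithMonom} as the base, and commuting $\partial_n^{a_n}$ through the Ore extension by the induction hypothesis) that the intermediate bit-lengths and per-step costs stay within the $\zeta$ envelope.
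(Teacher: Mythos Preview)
Your proposal is correct and follows essentially the same approach as the paper: count at most $d^n$ terms in each of $h_1$ and $h_2$, giving $d^{2n}$ term-by-term computations; bound each commutation of a monomial past a coefficient of $h_2$ by $\zeta$ via the multivariate generalization of Lemma~\ref{lem:multCoeffWithMonom}; then account for the multiplication by the coefficient from $h_1$ via $\theta(k_1,\zeta)$ and the final accumulation via $\rho$. Your write-up is in fact more explicit than the paper's own four-sentence argument, particularly in invoking Assumption~(ii) to justify that the commutation does not branch exponentially and in proposing the induction on $n$; these additions are consistent with, and merely flesh out, the paper's sketch.
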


\begin{proof}
  We have at most $d^n$ terms in $h_1$. When we multiply $h_1$ and
  $h_2$, we have to regard each term separately, and compute the
  non-commutative relations. This results in the $d^{2n}$ different
  computations of size $\zeta$. Then, for every one of those results,
  we need to apply a multiplication in $R$ with the coefficients of
  $h_1$. In the end, the results of those multiplications have to be
  added together appropriately, which results in the above complexity.
\end{proof}

This lemma shows that multiplying two elements in
$S$ has polynomial time complexity in the size of the elements, since the value
of $n$ is fixed for a chosen $S$.

\begin{remark}
The cost in Lemma \ref{lem:complexityMult} assumes the worst case,
where each Ore extension of $R$ has a non-trivial $\delta_i$. If for
one of the extensions, $\delta_i$ is equal to the zero map, then the worst case
complexity in this variable is lower, as the term-wise multiplication does not
result each time in a sum of different terms in $\partial_i$. One can see here,
that in general, when the cost of the protocol
is crucial for a resource-limited practical implementation, one should prefer
Ore extensions where $\delta$ is the zero map, i.e. skew-polynomial rings.
\end{remark}

\subsection{Security Analysis}

\subsubsection{The Attacker's Problem}

The security of our scheme relies on the difficulty of the following
problem, which is similar to the computational Diffie Hellman problem
(CDH) \cite{maurer1994towards}.

Given a ring $S$, a security parameter $\nu$, two sets $\mathcal{C}_l,\mathcal{C}_r$ of multiplicatively
    closed, commutative subsets of $S$, whose elements do not commute
    with a certain given $L \in S$. Furthermore, let the products
    $P_A\cdot L \cdot Q_A$ and $P_B \cdot L \cdot Q_B$ for some
    $(P_A,Q_A)$, $(P_B,Q_B) \in \mathcal{C}_l \times\mathcal{C}_r$
    also be known.
{\quote \textbf{Difficult Problem (Ore Diffie Hellman (ODH)):} Compute $P_B\cdot P_A \cdot L
  \cdot Q_A \cdot Q_B\  (= P_A\cdot P_B \cdot L
  \cdot Q_B \cdot Q_A)$ with the given information.}\\

One way to solve this problem would be to recover one of the elements  $P_A,
Q_A,P_B$ or $Q_B$. This can be done via factoring $P_A\cdot L \cdot Q_A$ or $P_B \cdot L \cdot Q_B$
which appears, as mentioned in the introduction, to be hard. 
Furthermore, even if one is able to factor an intercepted product, the factorization may not
be the correct one due to the non-uniqueness of the factorization in Ore
extensions.

Another attack for the potential eavesdropper is to guess the degrees
of $(P_A,Q_A)$ (or $(P_B,Q_B)$) and to create an ansatz with the
coefficients as unknowns, to form a system of multivariate polynomial
equations to solve. This type of attack and its infeasibility
was discussed already in \cite{boucher2010key}, Section 5.2., and the
argumentation of the authors translates analogously to our setup.
Finally, another attempt, which seems natural, is to generalize
the attack of Dubois and Kammerer to the multivariate setup. 
We will discuss such a possible
generalization for certain rings of type (\ref{eq:ourRings}) and show that it is
impractical in the following subsection.

We are not aware of any other way to obtain the common key of $A$ and $B$ while
eavesdropping on their
communication channel in Algorithm \ref{alg:keyExchange} other than trying
to recover the correct factorization from the exchanged products of
the form $P\cdot L \cdot Q$.

\begin{remark}
\label{rem:bruteForceAnsatz}
  Concerning the attack where Eve forms an ansatz and tries to solve
  multivariate polynomial systems of equations: In fact, each element
  in our system has total degree at most two. There exist attempts to
  improve the Gr\"obner computations for these kinds of systems
  \cite{courtois2000efficient,kipnis1999cryptanalysis}, but the
  assumptions are quite restrictive. 
  Besides the assumption that the given ideal must be zero-dimensional
  (which is only guaranteed in the case when the subring of constants is
  finite), there are certain relations between the number of
  generators and variables necessary to apply these improvements.
  We are not aware of any further progress on the techniques presented in
  \cite{courtois2000efficient,kipnis1999cryptanalysis} since 2000, which have
  fewer restrictions on the system to be solved.
\end{remark}

\begin{remark}
  Note that there is a corresponding decision problem related to ODH:
  Given a candidate for the final secret key, determine if this key is
  consistent with the public information exchanged by Alice and
  Bob. To the best of our knowledge, this is also currently intractable.
\end{remark}

\subsubsection{Generalization of the attack by Dubois and Kammerer}
\label{sbsbsctn:duboiskammerer}

In this subsection, we assume that our ring $S$ is Noetherian, and
that there exists a notion of a left or right Gr\"obner basis. Alice and Bob have
applied Algorithm \ref{alg:keyExchange} and their communication
channel has been eavesdropped by Eve. Now Eve knows about the chosen
ring $S$, the commuting subsets $\mathcal{C}_l,\mathcal{C}_r$ and the
exchanged products $P_A\cdot L \cdot Q_A$ and $P_B \cdot L \cdot Q_B$ for some
$(P_A,Q_A)$, $(P_B,Q_B) \in \mathcal{C}_l \times\mathcal{C}_r$. Let us
assume without loss of generality that Eve wants to compute $Q_A$.

Eve does not have a way to compute greatest common right divisors, but she can
utilize Gr\"obner basis theory. For this, she picks a finite family $\{E_i\}_{i=1}^m$, $m \in \NN$, of
elements in $C_r$. After that, she computes the set $G:=\{P_A\cdot L\cdot
Q_A\cdot E_i \mid i\in \{1, \ldots, n\}\}$.

All elements in $G$ (along with $P_A\cdot L \cdot Q_A$) have
$Q_A$ as a right divisor in common, since $E_i$ commutes with $Q_A$ for all
$i \in \{1, \ldots, m\}$. This means, the left ideal $I$ in $S$ generated by
the elements in $G \cup \left\{P_A\cdot L \cdot Q_A \right\}$ lies in -- or is even
equal to -- the left ideal
generated by $Q_A$. Hence, a Gr\"obner basis computation of $I$ might
reveal $Q_A$. If not, a set of polynomials of possibly smaller degree than the ones
given in $G$ that have $Q_A$ as a right divisor will be the result of
such a computation.

Besides having no guarantee that Eve obtains $Q_A$ from the
computations described above, the computation of a Gr\"obner basis is
an exponential space hard problem \cite{mayr1982complexity}. We tried
to attack our protocol using this idea. We chose the second Weyl
algebra as a possible ring, as there is a notion of a Gr\"obner basis
and there are implementations available. It turned out that
our computer ran out of memory after days of computation on several
examples where $L$, $Q_A$,
$Q_B$, $P_A$ and $P_B$ each exceed a total degree of ten. For practical
choices, of course, one must choose degrees which are higher (dependent on the
choice of the ring $S$). Hence,
we consider our proposal secure from this generalization of the attack
by Dubois and Kammerer.

\subsubsection{Recommended Key Lengths}

The question of recommended key lengths has to be discussed for each
ring of type (\ref{eq:ourRings}) separately. With lengths, one means in
the context of this paper the degree of the chosen public polynomials
$L$, $P$
and $Q$ in the $\partial_i$ for $i \in \{1,\ldots, n\}$ and the size
of their respective coefficients in $R$. We cannot state a general
recommendation for key-lengths that lead to secure keys for arbitrary
choices of $S$. For the Weyl algebra, where some implementations of
factoring algorithms
are available, we could observe through experiments that generic
choices of $P$ and $Q$ in $\mathcal{C}_l$ and $\mathcal{C}_r$,
respectively,
each of total degree 20, lead to products $P\cdot L \cdot Q$ which
cannot be factored after a feasible amount of time. 
If one chooses our approach (\ref{eq:commutingSubsets})
to find commuting elements, the choice of the degree of the
polynomials in $\tilde R[X]$ is the critical part, and the polynomials
$P$ and $Q$ -- as they are publicly known -- can be chosen to be of
small degree for performance's
sake. 

In general, for efficiency, we recommend choosing $n=2$ for
the ring $S$, as it already
ensures that $S$ is not a principal ideal domain and keeps
multiplication costs low.

For the case where our underlying ring is a finite field, we are able to
present in Table \ref{tbl:bruteForceVsComputation} a more detailed
cost estimate on the hardness to attack our
protocol by using brute-force. There, we assume that $R = \FF_q$, where $q = p^k$ for a prime $p$
and $k>2$. For efficiency, as outlined above, we pick $n=2$ and
further $k=3$. Then we
define $S$ as being  $R[\partial_1;\sigma_1][\partial_{2};\sigma_2]$, where
$\sigma_1,\sigma_2$ are different powers of the Frobenius automorphism
on $\FF_q$. We assume that the polynomials are stored in dense
representation in memory. The two commuting subsets $\mathcal{C}_l, \mathcal{C}_r$
are defined as in \eqref{eq:commutingSubsets}.
We will measure the time in computation steps. We assume that any
arithmetic operation on $\FF_q$, as well as the application of $\sigma_1$
resp. $\sigma_2$, takes one step. Then, the cost formula as presented
in Lemma \ref{lem:complexityMult} will be in the worst case 
$d^4\cdot 2$, as addition and multiplication are assumed to take one
computation step, and $\zeta \leq 3$ (due to the automorphism
group of $R$ having order 3).
The security parameter is given as a tuple $(d_L, d_{PQ},\nu)$, where
$d_L$ is the total degree of $L$, $d_{PQ}$ is the total degree of each
of $P$
and $Q$, and $\nu$ is the maximal degree of the polynomials in
$\FF_p[X]$ chosen to compute each of $P_A, P_B, Q_A$ and $Q_B$. To
simplify the analysis, we assume for our estimates that the degree in each
$\partial_1$ and $\partial_2$ will be half of the total degree for $L,
P$ and $Q$.
As for the cost of Alice resp. Bob to compute the messages they
are sending, and to compute the final key, we used the following formulas to make a prudent
estimation:
\begin{itemize}
\item \textbf{Computing all powers of $P$ and $Q$:} The cost $c(\nu)$ to
    compute all these powers up to a certain exponent $\nu$ can be estimated by the following recursive formula:
    \begin{eqnarray*}c(1) &=& 0,\text{ (P or Q are given, no need to
      compute)}\\
    c(j+1) &=& \frac{(j\cdot d_{PQ})^4}{8} + c(j).
    \end{eqnarray*}
    As a closed formula, we can we can write it as
    $$c(\nu) = \sum_{j = 0}^\nu \frac{(j\cdot d_{PQ})^4}{8}= \frac{d_{PQ}^4}{8} \cdot \sum_{j = 0}^\nu j^4 =\frac{d_{PQ}^4}{8} \cdot \left(5\cdot
    (\nu+1)^5-\frac{1}{2}\cdot (\nu+1)^4+\frac{1}{3}\cdot (\nu+1)^3-\frac{1}{30}\cdot
    \nu-\frac{1}{30}\right).$$
\item \textbf{Generating private polynomials:} Both Alice and Bob have
  to compute $P_A$ and $Q_A$ resp. $P_B$ and $Q_B$. In order to do so,
  each power of $P$ and $Q$ has to be computed and multiplied by an element in
  $\FF_p$, which results in 
  $$2 \cdot \sum_{j =0}^\nu \frac{j^2\cdot
    d_{PQ}^2}{4} = \sum_{j =0}^\nu \frac{j^2\cdot d_{PQ}^2}{2} = \frac{d_{PQ}^2}{2}\cdot 3\cdot
    \left((\nu+1)^3-\frac{1}{2}\cdot (\nu+1)^2+\frac{1}{6}\cdot \nu+\frac{1}{6}\right)$$ operations.  Adding all
    these together adds another $2\cdot \frac{\nu^2\cdot d_{PQ}^2}{4}$
    operations for Alice resp. Bob.
  \item \textbf{Computing initial message:} We assume that we have the
  private polynomials for $A$ and $B$ already computed, and their respective degree is
  $d_{PQ}\cdot \nu$. Then, in order to compute the initial message,
  we need $\frac{(d_{PQ} \cdot \nu)^4}{8}$ steps to compute $P_A \cdot L$,
  assuming that the degree of $L$ is smaller. Afterwards, to obtain
  $P_A \cdot L \cdot Q_A$, we have to do $\frac{(d_{PQ} \cdot \nu +
  \frac{d_L}{2})^4}{8}$ additional steps.
\item \textbf{Computing the shared secret key:} The shared secret takes then $\frac{(2 \cdot d_{PQ} \cdot \nu +
  d_L/2)^4}{8} + \frac{(3 \cdot d_{PQ} \cdot \nu +
  d_L/2)^4}{8}$ steps to compute by directly applying the cost estimate for multiplication.
\end{itemize}
The worst case for the size in bits of the shared key in the end can be
estimated by adding the degrees of the computed $L$, $P_A$, $Q_A$, $Q_B$ and $P_B$ together. This results in
the formula
$$\left(\frac{d_L + 8\cdot \nu \cdot d_{PQ}}{2}\right)^2\cdot \lceil \log(p)\rceil,$$
where we assume that the partial degree in $\partial_1$ and $\partial_2$ is
about half of the total degree of the final polynomial. In practice,
one would probably prefer to use a sparse representation, which would
on average lead to smaller final key sizes.

As for the cost for an attacker to do a brute-force attack, i.e. trying to
determine the shared secret of Alice and Bob, we assume that an
attacker would try all possibilities for one of the polynomials $P_A,
P_B, Q_A$ or $Q_B$ and check, for each possibility, if the computed
polynomial divides one of the messages between Alice and Bob. Hence,
for every possibility, Eve must solve a linear system of
  equations of size $d_m^2$, where $d_m$ is the maximal total degree of
  one of the messages (usually $2\cdot \nu\cdot d_{PQ}+d_L$). I.e. there arise $(\frac{d_m}{2})^{2\omega}$ additional
  computation steps for each possibility, where $\omega$ is the matrix
  multiplication constant (currently $\omega \approx
  2.373$). Initially, the attacker has to also
  compute all powers of $P$ resp. $Q$, and then the additions, as listed
  above.
The following table lists our computed costs for different security
parameters.
\begin{center}
\begin{table}
\begin{tabular}{|c|c|c|c|c|c|}
\hline
\multirow{2}{*}{\textbf{Security
  Tuple}}&\multicolumn{3}{|c|}{\textbf{Computation Costs for Alice and
    Bob}} &\multirow{2}{*}{$\left. \begin{array}{c} \textbf{Final Key} \\
    \textbf{Size in KB} \end{array} \right.$
}&\multirow{2}{*}{$\left. \begin{array}{c} \textbf{Brute-Force}\\
      \textbf{Cost} \end{array} \right.$}\\
\cline{2-4}
& \textbf{Secret Parameter} & \textbf{Initial Message} &
 \textbf{Shared Secret} && \\
\hline
\hline
$(30,5,10)$ & $1.247955E+08$ & $3.012579E+06$ & $1.145127E+08$ &
$46$   & $2.066009E+16$\\
$(30,5,15)$ & $8.144450E+08$ & $1.215633E+07$ & $5.073701E+08$ & $97 $  & $9.616857E+18$\\
$(30,5,20)$ & $3.176336E+09$ & $3.436258E+07$ & $1.497794E+09$ & $169$  & $2.237607E+21$\\
$(30,5,25)$ & $9.248193E+09$ & $7.853758E+07$ & $3.508245E+09$ & $260$  & $3.467317E+23$\\
$(30,5,30)$ & $2.229704E+10$ & $1.559313E+08$ & $7.074856E+09$ & $370$  & $4.110897E+25$\\
$(50,5,35)$ & $4.711215E+10$ & $3.172363E+08$ & $1.391021E+10$ & $514$  & $5.144021E+27$\\
$(50,5,40)$ & $9.029806E+10$ & $5.203613E+08$ & $2.315166E+10$ & $665$  & $4.258507E+29$\\
$(50,5,45)$ & $1.605675E+11$ & $8.086426E+08$ & $3.637583E+10$ & $836$  & $3.176783E+31$\\
$(50,5,50)$ & $2.690343E+11$ & $1.203174E+09$ & $5.458994E+10$ & $1027$& $2.179949E+33$\\
\hline
\end{tabular}
\caption{Computation costs (given as number of primitive computation steps) for Alice and Bob to perform Algorithm \ref{alg:keyExchange}
  with $R = \FF_q$, $p = 2$ and $S =
  R[\partial_1;\sigma_1][\partial_2;\sigma_2]$ and costs for Eve to
  perform a brute-force attack.}
\label{tbl:bruteForceVsComputation}
\end{table}
\end{center}


\begin{remark}
We tried
to factor the exchanged products $P_A\cdot L \cdot Q_A$ and $P_A \cdot
L \cdot Q_B$ from the small Example \ref{ex:ProtocolOnWeyl} 
in section \ref{subsec:description-of-the-protocol} using
\textsc{Singular} and \textsc{REDUCE}, and it turned out that both
were not able to provide us with one factorization after 48 hours of
computation on an iMac with 2.8Ghz (4 cores) and 8GB RAM available.
This means that even for rather small choices of keys, the
recovery of $P$ and $Q$ via factoring appears already to be hard using available
tools. Of course, for this small key-choice, a
brute-force ansatz attack (as described above in Remark \ref{rem:bruteForceAnsatz}) would succeed fairly quickly. We also tried
150 examples with different degrees for $P, Q, L$ and the respective
polynomials in $C_l$ and $C_r$. In particular, we let the degree of $L$ range from
$5$ to degree $50$, the degree of $P$ and $Q$ respectively between $5$
and $10$, and the degrees of the elements in $C_l$ and $C_r$ --- which
are created with the help of $P$ and $Q$ -- are having degrees ranging
between $25$ and $50$. We gave each factorization process a time limit
of 4 hours to be finished. None of the polynomials has been factored
within that time-frame. The examples can be downloaded from the
following website:
\url{https://cs.uwaterloo.ca/~aheinle/software_projects.html}.
\end{remark}

\subsubsection{Attacks On Similar Systems}

Here, we discuss why known attacks on protocols similar to Algorithm \ref{alg:keyExchange} do not
apply to our contexts.

As emphasized before, the attack developed by Dubois and Kammerer on the
protocol by Boucher et al. is prevented by choosing rings that are
not principal ideal domains. Thus, there is no general algorithmic way to compute
greatest common right divisors.

When applying the rings $S$ of type (\ref{eq:ourRings}) to exchange
keys, one does in fact not utilize the whole ring structure, but only the
multiplicative monoid structure. Therefore it appears to be reasonable
to consider also attacks developed for protocols based on non-abelian
groups (albeit they contain more structure than just monoids, the
latter being the correct description of our setup). The most famous
protocol is given by Ko et al., as discussed in the section on related
work. The attack developed by Jun and Cheon exploits the fact that
braid groups are linear. However, there is currently no linear representation
known for our rings of type (\ref{eq:ourRings}) (though it would be an
interesting subject of future research), so there is at present no analogous
attack on protocols based on our primitive.  Furthermore, even if a linear
representation for our rings were discovered, it is not clear whether Jun and
Cheon's attack could be extended to our case, as the authors
make use of invertible elements in their algorithm (which our structures,
only being monoids, do not possess).

\section{Implementation and Experiments}
\label{sctn:impl}

We developed
an experimental implementation of the
key exchange protocol as presented in Algorithm
\ref{alg:keyExchange} in the programming language
\textsc{C}\footnote{One can download the implementation
at \url{https://github.com/ioah86/diffieHellmanNonCommutative}}. We
decided to develop such a low-level implementation after we found
that commodity computer algebra systems appear to be too slow to make
experiments with reasonably large elements. This may be due to the
fact that their implemented algorithms are designed to be generally applicable
to several classes of rings and therefore come with a large amount of computational overhead.
Our goal is to examine key-lengths and the time it takes
for computing the secret keys. It is to be emphasized that our code
leaves considerable room for improvement.

For the implementation, we chose our ring $S$ to have the form as described
in Example \ref{ex:boucherExtended}.
In particular, our ring for the coefficients $R$ is set to
$\FF_{125}$, and we fixed $n := 2$. Internally, we view
$\FF_{125}$ isomorphically as
$\FF_5(\alpha) :=\FF_5[x]/\langle x^3 + 3x + 3\rangle$. Our
non-commutative polynomial ring $S$ is
$R[\partial_1;\sigma_1][\partial_2; \sigma_2]$, where
\begin{eqnarray*}
  \sigma_1&:& \FF_5(\alpha) \to \FF_5(\alpha),
    \text{\ }a_0 + a_1\alpha + a_2
  \alpha^2 \mapsto a_0 + a_1 + a_2 + 3a_2\alpha + (3a_1 +
  4a_2)\alpha^2\\
  \sigma_2&:& \FF_5(\alpha) \to \FF_5(\alpha),
    \text{\ }a_0 + a_1\alpha + a_2
  \alpha^2 \mapsto a_0 + 4a_1 + 3a_2 + (4a_1 + 2a_2)\alpha + 2a_1\alpha^2.
\end{eqnarray*}

 The ring of constants is
therefore $\tilde R :=\FF_5 \subset R$. These two automorphisms are
given by different powers of the Frobenius automorphism, and they are
the only two distinct non-trivial automorphisms on $\FF_{5}(\alpha)$
(cf. \cite[Theorem 12.4]{garling1986course}).

Note, that the multiplication of two elements
$f$ and $g$ in this ring requires $O(n^4)$ integer multiplications, where $n = \max\{\deg(f), \deg(g)\}$.

Following the notation as in Algorithm \ref{alg:keyExchange}, our
implementation generates random polynomials $L$, $P$ and $Q$ in
$S$. Our element $L$ is chosen to have total degree 50, and $P$, $Q$
each have total degree 5. Afterwards, it generates four polynomials in $\tilde R[X]$ to obtain
$(P_A,Q_A), (P_B,Q_B)$ in the fashion of (\ref{eq:commutingSubsets}).

Subsequently, the program computes the products $P_A \cdot L \cdot
Q_A$, $P_B \cdot L \cdot Q_B$ and the secret key $P_A\cdot P_B \cdot L
\cdot Q_B \cdot Q_A = P_B\cdot P_A \cdot L
\cdot Q_A \cdot Q_B$. Naturally, some of those computations would be
performed in parallel when the protocol is applied, but we did not
incorporate parallelism into our experimental setup. At runtime, all
computed values are printed out to the user.

We experimented with different degrees for the polynomials in $\tilde
R[X]$ to generate the private keys, namely 10, 20, 30, 40 and 50. This
leads to respectively 20, 40, 60, 80 and 100 indeterminates for 
Eve to solve for if she eavesdrops the channel between Alice and
Bob and tries to attack the protocol using an ansatz by viewing the
coefficients as unknown parameters. Even if she decides to attack the protocol using brute-force, she
has to go through $5^{10}, 5^{20}, 5^{30}, 5^{40}$ and $5^{50}$
possibilities respectively (note here, that for a brute-force attack,
Eve only needs to extract a right or left hand factor of the products
$P_A \cdot L \cdot
Q_A$ and $P_B \cdot L \cdot Q_B$ that
Bob and Alice exchange).
The file sizes and the timings for the experiments are illustrated in
Figure \ref{fig:timings}.

\begin{figure}
\includegraphics[width=.5\textwidth]{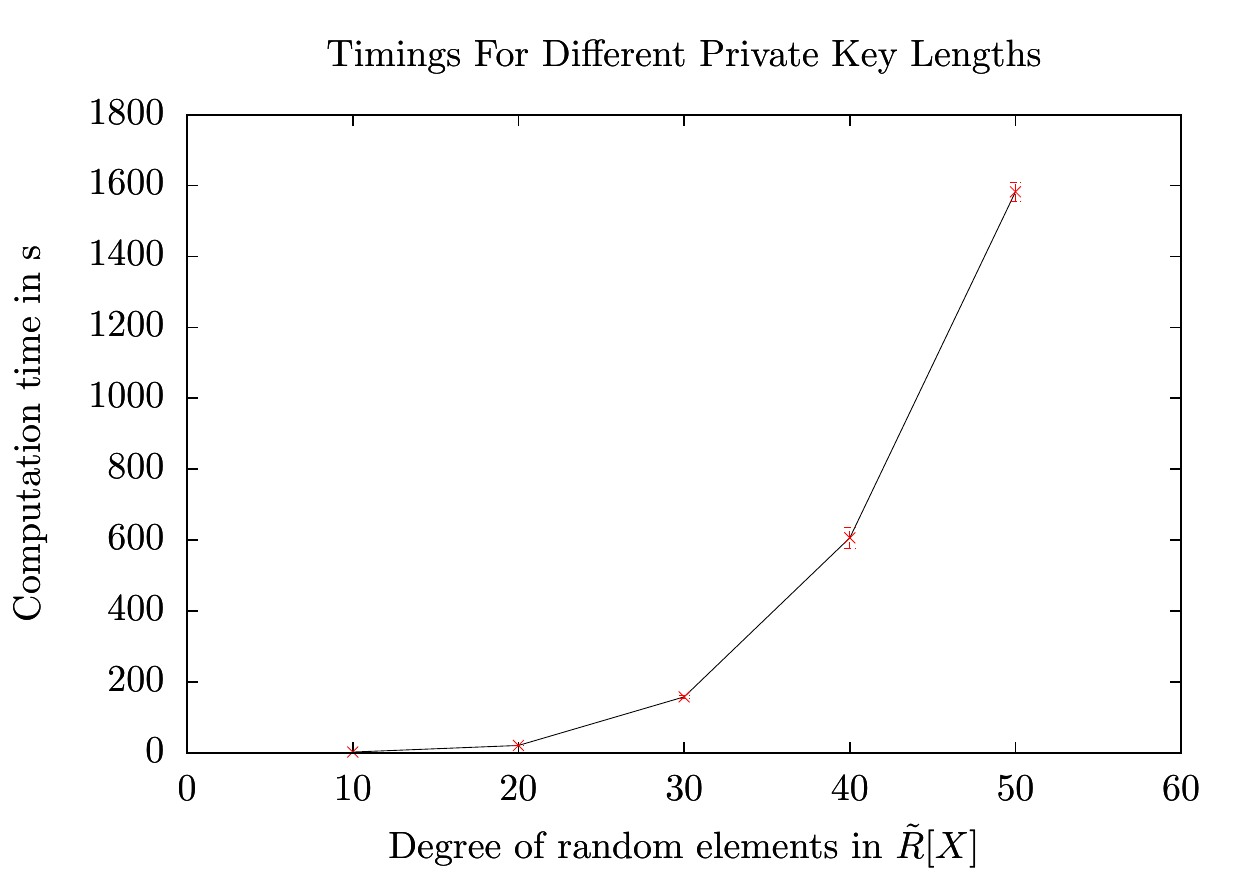}
\ \ \includegraphics[width=.5\textwidth]{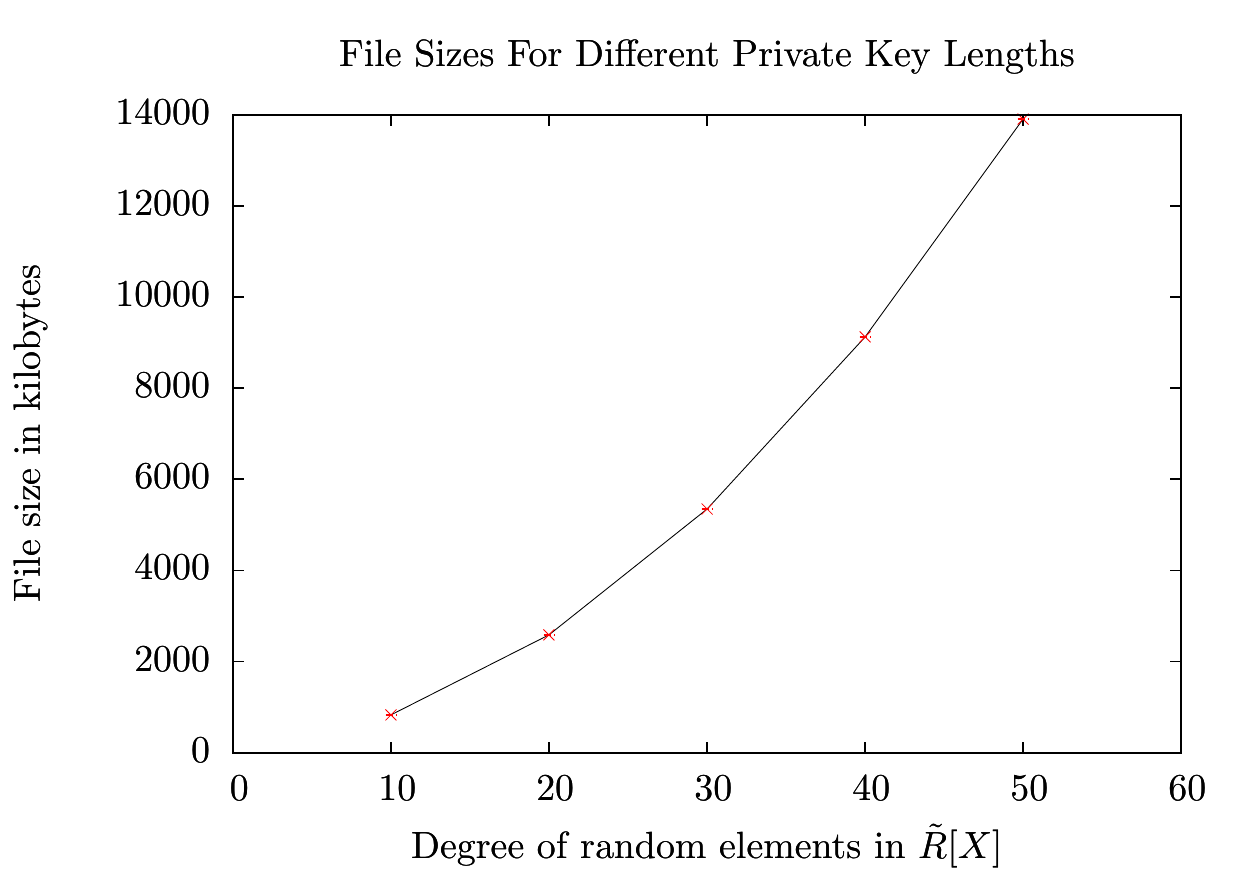}
\caption{Timings and file sizes for different degrees of elements in
  $\tilde R[X]$}
\label{fig:timings}
\end{figure}

Note, that the file sizes are not indicative of the actual bit-size of the
keys, as the files we produced are made to be human-readable. Allowing
for this fact, the bit-sizes of our keys are comparable to those found
necessary for secure implementations of the McEliece cryptosystem
\cite{mceliece1978public,bernstein2008attacking}, which is a
well-studied post-quantum encryption scheme.

In our experimental setup, we can see that one can generate
a reasonably secure key (degree 30 for the elements in
$\tilde R[X]$) in less than five
minutes at the current stage of the implementation.
For larger degrees, we believe that machine-optimized
code would decrease the computation time significantly. An interesting
question is whether arithmetics in our class of non-commutative rings
can be implemented in a smart way on a quantum computer.

\subsection{Challenge Problems}
\label{sbsctn:challenges}
For readers who would like to try to attack the keys generated by this particular
implementation, we have generated a set of challenge problems. They
can be found, with description, on the website of one of the authors
(\url{https://cs.uwaterloo.ca/~aheinle/miscellaneous.html#challenges}). 
There are also challenges included for the three-pass protocol as
described in section \ref{subsec:double-lock}.

\section{Insecure Keys}
\label{sctn:insecurekeys}

In this section, we will present an insecure key-choice for the Weyl
algebras. The construction of those insecure keys, which is related to
finding commutative subrings, can be applied to other
algebras of type (\ref{eq:ourRings}).

\subsection{Insecure Keys for the Weyl Algebras}

\subsubsection{Graded Polynomials.}
Based on the paper \cite{giesbrecht2014factoring},
there is a  large subset of the polynomial $n^{\mathrm{th}}$ Weyl algebra,
where the
factorization problem of their elements can be reduced to factoring
in a commutative, multivariate polynomial ring.

In particular, there exists a non-trivial $\ZZ^n$-grading on the
polynomial $n^{\mathrm{th}}$ Weyl algebra,
where the \textbf{$z^{\mathrm{th}}$ graded part}, for $z \in \ZZ^n$, can be
characterized in the following way:
$$A_{n}^{( z )} := \left\{ \sum_{e,w  \in \NN^{n}_{0} \atop w-e =z} r_{ew} \cdot
  x_1^{e_1}\cdots x_n^{e_n} \cdot \partial_1^{w_1} \cdots \partial_n^{w_n}
  \quad | \quad   r_{ew} \in \tilde \KK \right\}.$$

We call an element $h$ in the polynomial $n^{\mathrm{th}}$ Weyl algebra
\textbf{graded}, if $h \in A_n^{(z)}$ for some
$z \in \ZZ^n$. These graded polynomials are exactly the ones for which
the factorization problem can be reduced to commutative factorization
as mentioned above.

Now, there are two possible scenarios for weak key choices of $A$ and
$B$. Let $L$ be the public key, and $P_A,Q_A$
and $P_B, Q_B$ be the private keys of $A$ and $B$ respectively,
i.e. the final key of $A$ and $B$ is $P_A\cdot P_B\cdot L\cdot
Q_B\cdot Q_A$. The first scenario is that all of the keys that $A$ and $B$ use
are graded. Then a possible eavesdropper $E$ can recover the private
keys by factoring $P_A\cdot L\cdot Q_A$ and $P_B\cdot L\cdot Q_B$, applying techniques
presented by \cite{giesbrecht2014factoring}.
The second scenario is that one of the private keys of $P_A, P_B, Q_A$
and $Q_B$ is graded. Without loss of generality, let $P_A$ be
graded. Then $E$ can recover $P_A$ by factoring every graded summand
of $P_A\cdot L\cdot Q_A$, and therefore $E$ can also recover $Q_A$ and the
security is broken.

Fortunately, $A$ and $B$ can check their keys for being graded in
polynomial time. In particular, for an element $h$ the polynomial
$n^{\mathrm{th}}$ Weyl algebra, one has to check if for every $i \in \{1,\ldots, n\}$ the
difference of the exponents of $x_i$ and $\partial_i$ is the same in
every term of $h$. This is the case if and only if $h$ is graded.

\begin{remark}
  One can argue that the Weyl algebras as we define them assume that
  the 
  $x_i$ are units, and therefore the attack as described here is not
  possible once we choose for our keys coefficients that have  in the denominator some nontrivial
  polynomial in the $x_i$. But there is a
  possibility to lift factorizations into the polynomial Weyl algebra,
  which is described in \cite{giesbrecht2014factoring}.
  Thus one has to lift the keys and double check for them
  being graded or not. This check obviously still requires only polynomial time.
\end{remark}



Generally, as we can see here, one should study the
concrete ring of type (\ref{eq:ourRings}) chosen for Algorithm
\ref{alg:keyExchange}, in order to determine in which cases
factorization can be reduced to an easy problem in commutative algebra, and avoid those
cases for the key-choice. The construction of the insecure keys in the
case of the Weyl algebra as presented here gives an idea how to find
commutative subrings and how to avoid them.

\section{Enhancements}
\label{sec:Enhancements}

In what follows, we show that the use of multivariate Ore polynomials
is not limited to the Diffie-Hellman-like protocol discussed
and analyzed in the previous sections, but can also be utilized to
develop other cryptographic applications.

It should be noted that two of the protocols described below, namely
the digital signature scheme and the zero-knowledge-proof protocol, do
not require the use of commuting subsets, which reduces the necessary
amount of public information to be exchanged between Alice and Bob.

\subsection{The Three-pass Protocol and private $L$}
\label{subsec:double-lock}

As we saw in Algorithm \ref{alg:keyExchange},
Alice and Bob must make inter alia the following information public at the start:
$\mathcal{C}_l$, $\mathcal{C}_r$, and $L$.  If we can find a way to make any of these
objects private between Alice and Bob, the security of the system
may be increased.

One possible method to make $L$ private is based on a well-known
puzzle, known variously as the Locks and Boxes, the Knight and
the Princess, and perhaps other names.  In this puzzle, Alice wishes to
send Bob an item in a securely locked box.  Both have a supply of
locks and keys to the locks, which they could use to lock the box.
 However, neither
Alice nor Bob have keys to the other party's locks.  So if Alice
sends the item to Bob in a locked box, he will be unable to open it.
Furthermore, it does no good to send an unopened lock or a key
to the other party to use, as it would simply be stolen, or the key
copied, negating any security.
 
The solution is for Alice to first send Bob the item in a box,
sealed with one of her own locks.  Bob cannot open Alice's lock,
but he can add one of his own locks to the box (so it has now has
two locks on it) and return it to Alice.  Alice then removes her
lock, and sends the box a second time to Bob.  Finally, Bob
removes his lock, and opens the box.

We use this idea in the following protocol, for Alice to send
a secret choice of $L$ to Bob. 

\begin{algorithm}[H]
  \begin{algorithmic}[1]
    \STATE $A$ and $B$ publicly agree on a ring $S$ of type
    (\ref{eq:ourRings}), and two multiplicatively
    closed, commutative subsets $\mathcal{C}_l,\mathcal{C}_r
    \subset S$.
    \STATE $A$ chooses a secret $L \in S$, which is not central in
    $\mathcal{C}_l$ and $\mathcal{C}_r$, that she wants to share
               with $B$.
    \STATE $A$ picks random polynomials $P_A \in \mathcal{C}_l$ and
    $Q_A \in \mathcal{C}_r$, which form her private tuple $(P_A, Q_A)$. If coincidentally either $P_A$ or
                $Q_A$ commute with $L$, $A$ must choose a different
                pair $P_A$, $Q_A$. \label{step:paqa}
    \STATE $A$ computes the product $P_{A}\cdot L\cdot Q_{A}$,
               and sends it to $B$.
    \STATE $B$ picks random polynomials $P_B \in \mathcal{C}_l$ and
    $Q_B \in \mathcal{C}_r$, which form his private tuple $(P_B, Q_B)$.
    \STATE $B$ computes the intermediate product
               $P_{\text{int}}
                = P_{B}\cdot P_{A}\cdot L\cdot Q_{A}\cdot Q_{B}(=P_{A}\cdot P_{B}\cdot L\cdot Q_{B}\cdot Q_{A})$
               and sends it to $A$.\\
    \STATE $A$ divides $P_{\text{int}}$ on the right by $Q_{A}$,
               and on the left by $P_{A}$.  
               $A$ sends the result,
               $P_{B}\cdot L\cdot Q_{B}$, to $B$.
    \STATE $B$ divides $P_{B}\cdot L\cdot Q_{B}$ on the right
               by $Q_{B}$, and on the left by $P_{B}$
               , to recover the
               secret $L$.
  \end{algorithmic}
\caption{Three-pass exchange protocol with rings of
         type (\ref{eq:ourRings})}
  \label{alg:doubleLock}
\end{algorithm}
\noindent

\begin{remark} With $L$ secretly agreed upon, Alice and Bob
may now use Algorithm \ref{alg:keyExchange} to agree upon a
secret key, with only the commuting elements from $\mathcal{C}_l$ and
$\mathcal{C}_r$ being public.  Note that they
may choose a new set of tuples $(P_{A},\text{\ }Q_{A})$,
$(P_{B},\text{\ }Q_{B})$, and indeed, may even publicly agree
on a new choice of $\mathcal{C}_l$ and $\mathcal{C}_r$ for Algorithm \ref{alg:keyExchange}.
In this way, the information exchanged during Algorithm
\ref{alg:doubleLock} cannot help the eavesdropper to know the
$(P_{A},\text{\ }Q_{A})$ and $(P_{B},\text{\ }Q_{B})$
used in Algorithm \ref{alg:keyExchange}. 
\end{remark}

\begin{remark}
Naturally, Algorithm \ref{alg:doubleLock} can be used as a key
exchange protocol by itself, where $L$ is the key being
exchanged. Deciding which approach would be better in a given
situation would require further investigation. However, there is an
advantage in defending against the attack where Eve forms an ansatz and
tries to solve non-linear systems of equations: Eve would deal with
tertiary systems of equations (instead of quadratic ones) in this
case, as the coefficients of $L$ are also unknown.

\end{remark}



\begin{remark}
Regarding the requirement, at various points
in Algorithm \ref{alg:doubleLock}, that elements must not
commute: If the elements did commute, we actually do not know
of any attacks
that would take advantage of this property.  So these requirements
may be unnecessary.  However, in general terms, commuting algebraic
objects are often easier to analyze than non-commuting ones, and
might be easier to attack.  Prudence therefore suggests that we
choose non-commuting elements as much as possible, allowing
commutativity only where it is needed by the protocol in order to
function properly.
 
\end{remark}

The security assumption here differs slightly from the
one for Algorithm \ref{alg:keyExchange}. Here, Eve is given
$\mathcal{C}_l$, $\mathcal{C}_r$,
$P_A\cdot L \cdot Q_A$, $P_B\cdot P_A\cdot L \cdot Q_A\cdot Q_B$ and
$P_B\cdot L \cdot Q_B$.\\[5pt]
{\quote{\textbf{Difficult Problem (Ore Three Pass Protocol (OTPP)): }Determine $L$ with the given
  information.}}\\[5pt]
As is the case for the Diffie-Hellman-like protocol, the ability to feasibly
compute all factorizations of elements in $S$ would of course allow
Eve to determine Alice's secret. But this is a hard problem with
currently known methods, as
mentioned in the introduction.
We are not aware of any other feasible way to obtain $L$.

\begin{remark}
  Similarly to ODH, there is also a corresponding decision problem related to OTPP:
  Given a candidate for the final secret $L$, determine if $L$ is
  consistent with the public information exchanged by Alice and
  Bob. To the best of our knowledge, this is also currently intractable.
\end{remark}

\subsection{ElGamal-like Encryption and Signature Schemes}

In 1984, ElGamal showed that a cryptosystem and digital
signature scheme were possible using exponentiation in
finite fields \cite{elgamal1985public}.  Here, we will show that such schemes
are also possible using non-commutative polynomial rings.
There will necessarily be some differences between our
schemes and those of ElGamal, since finite fields are
commutative and all (non-zero) elements are invertible,
whereas our structures are non-commutative and the elements are
non-invertible.  Nevertheless, approaches very similar to
those of ElGamal can be developed.

\subsubsection{An ElGamal-like Encryption Scheme}
\label{subsubsec:ElGamal-encryption}

Suppose Bob wishes to send a secret message to Alice.
An inconvenience of the Diffie-Hellman-like protocols is that
Bob must wait for Alice to respond before the key is decided.
Only then can Bob encrypt and send his message.

It would be desirable to have a scheme whereby Bob, or anyone
else, can send an encrypted message to Alice whenever they
wish, without the need to wait for Alice to respond.  This
is possible, as we now show, but it requires Alice to precompute
and publish some information. It should be noted that the following
encryption scheme is only intended to provide a basic level of
security (i.e. an eavesdropper who intercepts the ciphertext is not able to
compute the corresponding plaintext). Further enhancements to the scheme are required for
stronger notions of security (e.g. indistinguishability against
adaptive chosen-ciphertext attacks).

\noindent
\textbf{Preparation:} Alice chooses $L\in S$, and two multiplicatively
    closed, commutative subsets $\mathcal{C}_l,\mathcal{C}_r
    \subset S$. Then she
 picks random polynomials $P_A \in \mathcal{C}_l$ and
    $Q_A \in \mathcal{C}_r$, which form her private tuple $(P_A, Q_A)$. If coincidentally either $P_A$ or
                $Q_A$ commute with $L$, Alice must choose a different
                pair $P_A$, $Q_A$.
Alice publishes $L$, $\mathcal{C}_l$, $\mathcal{C}_r$, and
$P_{\text{Alice}}:=P_{A}\cdot L \cdot Q_{A}$.
The tuple $(P_{A},\text{\ }Q_{A})$ is kept secret.

\noindent
\textbf{Encryption:} Let $m\in S$ be a message which Bob
wishes to encrypt before sending to Alice.  (We assume
that a plaintext message has already been mapped to
Ore polynomial form, so $m \in S$.)

\noindent
Bob picks random polynomials $P_B \in \mathcal{C}_l$ and
$Q_B \in \mathcal{C}_r$, which form his private tuple $(P_B,
Q_B)$. Again, Bob ensures that $P_B$, $Q_B$ and $L$ are pairwise
non-commutative.

\noindent
Bob computes $P_{\text{Bob}}:=P_{B}\cdot L \cdot Q_{B}$
and $P_{\text{final}}:= P_{B}\cdot P_{\text{Alice}}\cdot Q_{B}$,
and encrypts the message as $m_{e} := m\cdot P_{\text{final}}$.
Bob then sends the pair $\left(m_{e},P_{\text{Bob}}\right)$
to Alice.

\noindent
\textbf{Decryption:} Given $\left(m_{e},P_{\text{Bob}}\right)$,
Alice first computes
$P_{\text{final}} = P_{A}\cdot P_{\text{Bob}}\cdot Q_{A}$.
She then divides $m_{e} (= m\cdot P_{\text{final}})$ on the
right by $P_{\text{final}}$ to recover $m$.

\noindent
\textbf{Correctness:} The correctness depends
on Alice computing the same $P_{\text{final}}$ as Bob.  But
this will be true because $P_{A}, P_{B}$ commute, as do
$Q_{A}, Q_{B}$, so $P_{B}\cdot P_{\text{Alice}}\cdot Q_{B}
= P_{A}\cdot P_{\text{Bob}}\cdot Q_{A}$.

\noindent
\textbf{Security:} Comparing this scheme
with Algorithm \ref{alg:keyExchange}, it is clear that
$P_{\text{Alice}}$ is the product sent from Alice to Bob,
$P_{\text{Bob}}$ the product from Bob to Alice, and
$P_{\text{final}}$ is the shared final key that is agreed upon. In
addition to knowing these exchanged values, Eve will also know the
encrypted message $m_e$.
So the security is essentially the same as that of
Algorithm \ref{alg:keyExchange}, except for an additional
assumption, that the attacker  will find it difficult to factor
$m_{e} := m\cdot P_{\text{final}}$.

\begin{remark}
As ElGamal observes, the multiplication
$m_{e} := m\cdot P_{\text{final}}$
in the encryption step could be replaced by any
invertible operation, e.g., by addition.  Indeed, setting
$m_{e} := m + P_{\text{final}}$ gives essentially the
encryption scheme proposed by Boucher et al.
\cite{boucher2010key}.  This has the advantage of being
easier to compute, and giving an encrypted $m_{e}$ of smaller size
than one that uses multiplication.  We would caution,
however, that
for such a scheme to be secure, both $m$ and $P_{\text{final}}$
should probably be dense polynomials.  If one polynomial is
dense but
the other sparse, for example, much of the structure of the dense
polynomial would be visible in $m_{e}$, and could aid an attacker.
\end{remark}

\begin{remark} 
The message encryption step (i.e. computing $m_e = m\cdot
P_{\text{final}}$) is not required to be performed inside $S$. In
fact, $P_{\text{final}}$ is the key with which Bob encrypts the
message $m$, and the algorithm that is used for this step can be
replaced by any known and well-studied private key system, such as AES
\cite{daemen2002design}. In this case, the security of our protocol
reduces to the security of 
 our Algorithm \ref{alg:keyExchange}, along with the security of the
 chosen private key system.


\end{remark}

\subsubsection{An ElGamal-like Digital Signature Scheme}
\label{subsubsec:ElGamal-signature}

Suppose Alice wishes to prove to Bob that a message
she is sending him did, in fact, come from her.  For his part,
Bob may want this proof, both as a guard against forgers,
and also to prevent Alice from denying at a later time
that the message had come from her.  The digital signature
scheme shown below is intended to accomplish this.
(Note that a message $m$
need not be encrypted to be signed; Alice may also use
her signature to show that she publicly approves of a
cleartext message.)  As in the encryption scheme in section
\ref{subsubsec:ElGamal-encryption}, this requires Alice
to precompute and publish information at some location,
e.g., a secure webpage.  This must be securely associated with her,
both to prevent forgers from altering the information,
and to prevent her from trying to repudiate any
messages bearing her digital signature. Again, as in section
\ref{subsubsec:ElGamal-encryption}, our proposed signature scheme is
only intended to provide a basic level of security (i.e. Eve cannot
forge Alice's signature on an arbitrary message of Eve's
choice). Further enhancements are needed for stronger notions of security.

\noindent
\textbf{Preparation:} Alice chooses $a_1,L,a_2\in S$, pairwise
non-commutative.
Alice publishes $L$ and
$P_{\text{Alice}}:=a_1\cdot L \cdot a_2$.
The tuple $(a_1, a_2)$ is kept secret.

\noindent
\textbf{Signature Creation:} Let $m \in S$ be a message
which Alice wishes to sign.  ($m$ may be either encrypted
or unencrypted.)

Alice chooses two new random polynomials
$k_1, k_2 \in S$. (Alice checks that $L$, $k_{1}$, $k_{2}$
are pairwise non-commutative.)  Alice computes
$\gamma := k_{1}\cdot L\cdot k_{2}$,
$\epsilon_{1}:=k_{1}\cdot L\cdot a_{2}$, and
$\epsilon_{2}:=a_{1}\cdot L\cdot k_{2}$.
The tuple $(k_{1}, k_{2})$ is kept secret.

Now Alice chooses $q_1\in S$ and computes
$r_1 := m - \gamma a_{1} - q_{1}k_{1} \in S$, so that:
\begin{equation}  \label{eq:m_div_1}
  m - \gamma a_{1} = q_{1}k_{1} + r_{1}.
\end{equation}
Analogously, Alice also chooses $q_2 \in S$ and computes
$r_2 := m - a_{2}\gamma - k_{2}q_{2} \in S$, so that:
\begin{equation}  \label{eq:m_div_2}
  m - a_{2}\gamma = k_{2}q_{2} + r_{2}.
\end{equation}

\noindent
Alice sends Bob the signed message $m$ as the 8-tuple
$\left(m,\gamma,q_{1},r_{1},q_{2},r_{2},\epsilon_{1},
\epsilon_{2}\right)$.  Note that $k_{1}$, $k_{2}$,
$a_{1}$, $a_{2}$ are all kept
secret by Alice.

\noindent
\textbf{Signature Verification:}
Given $\left(m,\gamma,q_{1},r_{1},q_{2},r_{2},\epsilon_{1},
\epsilon_{2}\right)$, (and the public $L$
and $P_{\text{Alice}})$, Bob computes:
\begin{equation}
  \begin{split}
    \mathrm{sig}_{\text{left}} &:= \left(m-r_{1}\right)\cdot L
                          \cdot \left(m-r_{2}\right),\\
    \mathrm{sig}_{\text{right}} &:= q_{1}\gamma q_{2}
                           + q_{1}\epsilon_{1}\gamma
                           + \gamma \epsilon_{2} q_{2}
                           + \gamma P_{\text{Alice}} \gamma.
  \end{split}
\end{equation}If $\mathrm{sig}_{\text{left}}=\mathrm{sig}_{\text{right}}$,
then the signature is accepted as valid.  Otherwise, the
signature is rejected.

\noindent
\textbf{Correctness:} We have
\begin{equation}  \nonumber
  \begin{split}
    \mathrm{sig}_{\text{left}} &= \left(m-r_{1}\right)\cdot L
                          \cdot \left(m-r_{2}\right) \\
                      &= \left(q_{1}k_{1} + \gamma a_{1}\right)
                         \cdot L
                         \cdot \left(k_{2}q_{2} + a_{2}\gamma\right)
                           \text{,\ \ using \ (\ref{eq:m_div_1}),
                                 (\ref{eq:m_div_2})} \\
                      &= q_{1}k_{1} L k_{2}q_{2}
                           + q_{1}k_{1} L a_{2}\gamma
                           + \gamma a_{1} L k_{2}q_{2}
                           + \gamma a_{1} L a_{2} \gamma \\
                      &= q_{1}\gamma q_{2}
                           + q_{1}\epsilon_{1}\gamma
                           + \gamma \epsilon_{2} q_{2}
                           + \gamma P_{\text{Alice}} \gamma \\
                      &= \mathrm{sig}_{\text{right}},
  \end{split}
\end{equation}using $k_{1} L k_{2} = \gamma$,
\, $k_{1} L a_{2} = \epsilon_{1}$,
\, $a_{1} L k_{2} = \epsilon_{2}$,
\, and \, $a_{1} L a_{2} = P_{\text{Alice}}$.

\noindent
\textbf{Security:}  To forge a signature for a given message $m$,
Eve must find
values of $\gamma$, $q_{1}$, $r_{1}$, $q_{2}$, $r_{2}$, $\epsilon_{1}$,
$\epsilon_{2}$ which result in 
$\mathrm{sig}_{\text{left}}= \mathrm{sig}_{\text{right}}$.  We cannot see any
easy way to do this.  She can, of course, choose her own values
of $k_{1}$, $k_{2}$, and compute a corresponding
$\gamma=k_{1}Lk_{2}$.  This is, indeed, a plausible value for
$\gamma$,
since Alice could have chosen these values for $k_{1}$, $k_{2}$
herself.  But Eve does not have the values of $a_{1}$ and
$a_{2}$.  If she guesses them, as say $\tilde{a}_{1}$,
$\tilde{a}_{2}$, then eventually she will fail to form
$P_{\text{Alice}}$ correctly when the product
$\tilde{a}_{1}L\tilde{a}_{2}$ is formed in the correctness proof above,
and will not have
$\mathrm{sig}_{\text{left}}=\mathrm{sig}_{\text{right}}$.

Furthermore, given a legitimate signature from Alice,
it appears doubtful that an attacker can recover
$a_{1}$ and $a_{2}$.  For example, from (\ref{eq:m_div_1}),
we have $m-r_{1}=q_{1}k_{1} + \gamma a_{1}$. Here, the left
side is known, as are $q_{1}$ and $\gamma$ on the right.
But we know no way of determining $k_{1}$ and $a_{1}$,
other than computing the syzygy-module $M \in S^3$ of $q_1$, $\gamma$, and
$m-r_1$ and considering the subset $\{\ell = [\ell_1, \ell_2, \ell_3] \in M \mid \ell_3 =-1\}
\subseteq M$. As this is in general an infinite set, this does not
yield a practical way of
recovering $k_1$ and $a_1$.

It will be noted, however, that computing a signature
may result in very large expressions.  Thus,
$\left(m-r_{1}\right)\cdot L\cdot \left(m-r_{2}\right)$
is likely to be much larger than the message $m$ itself.
However, in practice, one would only create a signature on a certain
hash-sum of the message $m$, which has fixed length for all possible messages.

\subsection{A Zero Knowledge Proof Protocol Using Multivariate
Ore Polynomials}

As usual, let $S:=R[\partial_1;\sigma_1,\delta_1]
[\partial_2;\sigma_2,\delta_2]
\ldots [\partial_n;\sigma_n,\delta_n]$, for some fixed
$n \ge 2$.  Let $L \in S$.  Suppose Alice knows a factorization
(possibly partial) of $L$ into two nontrivial factors
$\ell_{1}$ and $\ell_{2}$.  That is, $L= \ell_{1}\cdot \ell_{2}$,
where $L, \ell_{1}, \ell_{2} \in S$.  (By nontrivial, we mean
that for at least one $i$, where $1 \le i \le n$,
we have $\deg_{\partial_{i}}(\ell_{1}) \ge 1$, and similarly
for $\ell_{2}$.  Furthermore, the factorization may only be partial,
i.e., $\ell_{1}$ or $\ell_{2}$
may themselves factor over $S$.  But we will not be concerned
here with this possibility.)  Note that as we are assuming that
factorization of multivariate Ore polynomials is, in general,
computationally infeasible, Alice may well have accomplished her
feat by first choosing $\ell_{1}$ and $\ell_{2}$, then creating
a suitable $L$ by computing $L=\ell_{1}\cdot \ell_{2}$, and then
finally publishing $L$.
But the particular method she used to produce an $L$ and its
factors is not important to the protocol.
 
Bob also knows $L$, but does not know $\ell_{1}$ or $\ell_{2}$.
Alice wishes to convince Bob, beyond any reasonable doubt,
that she knows a factorization of $L$.  But she does not wish
to tell Bob her factors $\ell_{1}$ or $\ell_{2}$, or to give Bob 
enough information that would allow him to compute the
factors within any feasible time.  We have, therefore,
a situation which calls for a zero knowledge proof protocol.
 
Alice can do this using the following protocol, which is
repeated as many times as Bob desires, until he is convinced
that Alice does, indeed, possess a factorization of $L$:

\noindent
\textbf{Step 1:}  Alice chooses two polynomials
$p_{1}, p_{2} \in S$, and forms the product
$\pi := p_{1}\cdot L \cdot p_{2}$.
She sends $\pi$ to Bob.  She also tells Bob the degrees
of $p_{1}$ and $p_{2}$ in each of the $\partial_{i}$.
That is, for each $i$, where $1 \le i \le n$, she sends
Bob $\deg_{\partial_{i}}(p_{1})$ and $\deg_{\partial_{i}}(p_{2})$.
Apart from this information, however, Alice keeps $p_{1}$ and
$p_{2}$ private, unless Bob specifically asks for them in Step 2.
 
Note that $\pi = p_{1}\cdot L \cdot p_{2}
= p_{1} \cdot \ell_{1}\ell_{2} \cdot p_{2}
= \pi_{1} \cdot \pi_{2}$, where we define
$\pi_{1} := p_{1} \ell_{1}$, and $\pi_{2} := \ell_{2} p_{2}$.
Thus, two different partial factorizations of $\pi$ are
$\pi = p_{1}\cdot L \cdot p_{2}$ and $\pi = \pi_{1} \cdot \pi_{2}$.
Essentially, in Step 2., Bob may ask Alice for one, and only one,
of these two factorizations.

\noindent
\textbf{Step 2:}  Having received $\pi$ and the degree information
for $p_{1}$ and $p_{2}$ from Alice, Bob asks Alice for exactly one
of the following: Either (i) $p_{1}$ and $p_{2}$, or (ii) $\pi_{1}$
and $\pi_{2}$.
 
\noindent
\textbf{Step 2a:}  If Bob asked for $p_{1}$ and $p_{2}$, he
checks that $\pi = p_{1}\cdot L \cdot p_{2}$.  He also checks
that $p_{1}$ and $p_{2}$ satisfy the degree bounds sent earlier
by Alice in Step 1.

\noindent
\textbf{Step 2b:}  If Bob asked for $\pi_{1}$ and $\pi_{2}$, he     
checks that $\pi = \pi_{1} \cdot \pi_{2}$.  He also checks that
$\pi_{1}$ and $\pi_{2}$ satisfy the following degree conditions:
For each $i$, where $1 \le i \le n$, he requires that
$\deg_{\partial_{i}}(\pi_{1}) \ge \deg_{\partial_{i}}(p_{1})$.
Furthermore, for at least
one $j$, $1 \le j \le n$, the inequality must be strict, i.e.,
$\deg_{\partial_{j}}(\pi_{1}) > \deg_{\partial_{j}}(p_{1})$.
Likewise, $\pi_{2}$
must satisfy $\deg_{\partial_{i}}(\pi_{2}) \ge \deg_{\partial_{i}}(p_{2})$
for all $i$,
where $1 \le i \le n$, and for at least one $k$, where
$1 \le k \le n$, must satisfy
$\deg_{\partial_{k}}(\pi_{2}) > \deg_{\partial_{k}}(p_{2})$.

\noindent
If any of these checks fail, then the
protocol terminates, and Bob rejects Alice's claim to know a
nontrivial factorization of $L$.  If the checks hold true,
then Alice has passed this cycle of the protocol.
 
\noindent
Steps 1. through 2b. constitute one cycle of the protocol.
For each cycle, Alice chooses a new pair of polynomials
$p_{1}$, $p_{2}$ in Step 1., never using the same polynomial
more than once.  The cycle is repeated until either one of
the checks fail, or until Bob is convinced, beyond a reasonable
doubt, that Alice does indeed posses a nontrivial factorization
of $L$.

\noindent
\textbf{Discussion and Security of the Protocol:}
For each cycle of the protocol, Bob can randomly choose
whether to ask Alice for $p_{1}$, $p_{2}$, or for
$\pi_{1}$, $\pi_{2}$.  Suppose this is repeated many times,
with the answers always satisfying the checks.  Then Bob
should eventually be convinced, beyond a reasonable
doubt, that for any $\pi$ offered by Alice in Step 1.,
she is always able to factor $\pi$ both as $p_{1}Lp_{2}$
and as $\pi_{1}\pi_{2}$, with the degree conditions
also satisfied.
 
The two questions we must address are: (i) should Bob
believe that Alice knows some nontrivial factorization
$\ell_{1}\ell_{2}$ of $L$?; and (ii) do Alice's answers
to Bob's queries give Bob a practical method to determine
a nontrivial factorization of $L$?
 
\noindent
\textbf{(i) Alice can factor $L$:}

First, consider the requirement that Alice must give degree
conditions on $p_{1}$, $p_{2}$ in Step 1., before Bob says
whether he wishes to know $p_{1}$, $p_{2}$, or $\pi_{1}$,
$\pi_{2}$ in Step 2.  If we did not impose this condition
on Alice, she could trick Bob into believing that she had
a factorization of $L$ as follows.  First, instead of a $p_{1}$,
she chooses $p_{1}', p_{1}'', p_{2} \in S$, and computes
$\pi := p_{1}' \cdot p_{1}'' \cdot L \cdot p_{2}$, sending
only $\pi$ to Bob.  Now, if Bob asks for $p_{1}$, $p_{2}$,
she sends him $p_{1} := p_{1}'p_{1}''$, and $p_{2}$.  He
then computes $p_{1} \cdot L \cdot p_{2}$, and finds this
product to equal $\pi$, as expected.  On the other hand,
if he asks for $\pi_{1}$, $\pi_{2}$, Alice sends him
$\pi_{1} := p_{1}'$, $\pi_{2} := p_{1}''Lp_{2}$.  Again,
Bob will find that $\pi_{1} \cdot \pi_{2}$ yields $\pi$,
as expected.  So Alice appears to have passed the test,
even though she needed no knowledge of any factorization
of $L$.

Forcing Alice to give the degree conditions in Step 1.,
before Bob announces his choice in Step 2., prevents
Alice from using this strategy to deceive Bob.
Furthermore, the degree conditions on $\pi_{1}$ and
$\pi_{2}$, if Bob should ask for these two polynomials,
prevents Alice from simply setting either
$\pi_{1} := p_{1}$, $\pi_{2} := Lp_{2}$, or
$\pi_{1} := p_{1}L$, $\pi_{2} := p_{2}$, neither
of which would have required any knowledge of the factors
of $L$.  Informally, the degree conditions force both
$\pi_{1}$ and $\pi_{2}$ to partially "overlap" $L$
in the product $\pi$.
 
Now, the above might appear to imply that $\pi_{1}$
and $\pi_{2}$ must have the forms
$\pi_{1} = p_{1}\ell_{1}$, $\pi_{2} = \ell_{2}p_{2}$.
However, factorization in $S$ is unique only up to
similarity
(cf. Section \ref{subsec:basic-notations-and-definitions},
Definition \ref{definition:similarity}).  Consequently,
it is possible that Alice has found $p_{1}$, $p_{2}$,
$\pi_{1}$, $\pi_{2}$, such that
$p_{1}Lp_{2}=\pi_{1}\pi_{2}$ (which she sets equal to $\pi$),
and yet $p_{1}$ is not a left divisor of $\pi_{1}$,
nor is $p_{2}$ a right divisor of $\pi_{2}$.  With such
a choice of polynomials, Alice would be able to satisfy
the demands of
the protocol, without necessarily having any knowledge
of the factors of $L$.  However, finding such a set of
polynomials is, as far as we know, computationally
impractical with currently known methods.  The only
reasonable conclusion for Bob, therefore, is that
$\pi_{1} = p_{1}\ell_{1}$, $\pi_{2} = \ell_{2}p_{2}$.

However, if Alice knows both $\pi_{1}$ and $p_{1}$,
she can simply perform exact division of $\pi_{1}$ on
the left by $p_{1}$ to obtain $\ell_{1}$.  Similarly,
(exactly) dividing $\pi_{2}$ on the right by $p_{2}$
will yield $\ell_{2}$.  Hence Bob must conclude that
Alice does, indeed, know some nontrivial factorization
$\ell_{1}\ell_{2}$ of $L$.
 
\noindent
\textbf{(ii) Bob cannot factor $L$:}
 
Again, let us first consider the effect that knowing
the degrees of $p_{1}$ and $p_{2}$ will have on Bob's
attempts to find a factorization of $L$.  Certainly,
this knowledge makes it easier for him to set up an
ansatz, e.g., of the form
$\pi = p_{1}\ell_{1}\ell_{2}p_{2}$.  However, even
if Bob did not know these degree conditions, the
number of possibilities he would have to consider
would increase only by a factor that is polynomial
in $n$ and the maximum degree in any $\partial_{i}$
of $\pi$.  That is, if Bob could find some factors
of $L$ in polynomial time by some algorithm which
makes use of the given degree conditions, he could
also find the factors in polynomial time without
knowing these conditions.  Thus, these degree conditions
do not, in themselves, compromise the security of the
protocol.

Now, Bob obtains two sorts of information from Alice.
The first is $p_{1}$ and $p_{2}$ such that
$\pi = p_{1}Lp_{2}$.  Clearly, this does not help
Bob to factor $L$, since he could just as well have
chosen his own $p_{1}$ and $p_{2}$, computed the
product $p_{1}Lp_{2}$ to form his own $\pi$, and
done so as many times as he wished.  Such queries,
therefore, do not lead to any exploitable weakness.
 
The second type of query gives Bob $\pi_{1}$ and
$\pi_{2}$ such that $\pi = \pi_{1}\pi_{2}$.  Let
us first consider $\pi_{1}$ (the situation for
$\pi_{2}$ is similar): Bob knows that
$\pi_{1} = p_{1}\ell_{1}$, though of course
$p_{1}$ and $\ell_{1}$ are unknown to him.
Bob also knows $L$, though again its factors
$\ell_{1}$, $\ell_{2}$ are also unknown to him.
Hence one type of attack would be to use the
fact that $\ell_{1}$ is a right divisor of
$\pi_{1}$, and also a left divisor of $L$.
It is interesting to note, however, that even
in (non-commutative) Euclidean domains, there
is currently no known practical algorithm to
find such a left and right simultaneous
divisor.  Our domain $S$ is not even Euclidean,
so the situation is even worse for an attacker.
The only general, though impractical, approach
involves forming an ansatz,
and solving the resulting quadratic           
system of multivariate polynomial equations.
So this attack fails.
 
Another approach Bob might try is to ask
many queries of this sort, to generate
a growing set of polynomials
$\pi = p_{1}\ell_{1}$,
$\pi' = p_{1}'\ell_{1}$,
$\pi'' = p_{1}''\ell_{1}$, et cetera.
Now, $\ell_{1}$ is a common right divisor
of all these polynomials, and as the set grows
in size, it will very likely be the gcrd
of the set.  Hence, if $S$ were a Euclidean
domain, Bob could use the Euclidean
algorithm to find $\ell_{1}$ with high
probability.  However, $S$
is not a Euclidean domain.
The only hope for a potential attacker could be -- similar to the
attack described in section \ref{sbsbsctn:duboiskammerer} --  a computation of a
left Gr\"obner basis of the ideal generated by the $\pi$s, if the notion of
a left Gr\"obner basis exists for the chosen ring $S$ (which does not,
if $S$ is chosen non-Noetherian). Then, it is possible
that the basis consists of only one element, namely $\ell_1$. We made
experiments in the polynomial second Weyl algebra, choosing very small degrees
(total degree less or equal to five) for $\ell_1$. With a
sufficiently large number of these $\pi$s, the computed Gr\"obner
basis had indeed the desired structure. However,
as the computation of a Gr\"obner basis is exponential space hard
\cite{mayr1982complexity}, this is not a feasible approach in
general. Even with slightly larger choices for these degrees, the
Gr\"obner basis computation in case of the second Weyl algebra fails to terminate
within a reasonable time.
Other than this, we are not aware of any known practical method to find
such a common divisor for $S$.

Again, a general, though impractical, approach
involves forming an ansatz,
and solving the resulting quadratic
system of multivariate polynomial equations.

Similar remarks apply to Bob's attempts to
determine $\ell_{2}$.
 
We conclude, therefore, that it will be
impractical for Bob to determine a
nontrivial factorization of $L$ by using
Alice's answers to his queries.

\section{Conclusion}
\label{sctn:conclusion}

The key exchange primitive as presented in \cite{boucher2010key} has been
altered to be immune against the attack presented in
\cite{dubois2011cryptanalysis}, and extended. The new version presented
in this paper continues to have the
positive properties discussed by Boucher et al. in their
paper. The security of our proposal is related to the hardness
of factoring in non-commutative rings and the non-uniqueness of the
factorization. A class of insecure key choices that would reduce the
problem to commutative factorization was outlined.
Moreover, we provide the freedom to choose rings that are
not Noetherian, where a general factorization
algorithm might not even exist.

An implementation for a specific ring 
is provided and we look forward to feedback from
users. Furthermore, based on this implementation,
we have published some challenge problems as described
in section \ref{sbsctn:challenges}. We encourage the reader to attack
our proposed schemes via these challenges.

We also mention here that related protocols can be developed 
using this primitive.  We presented four such enhancements in
section \ref{sec:Enhancements}:
a three-pass protocol, an ElGamal-like encryption scheme,
and an ElGamal-like digital signature scheme, and a zero-knowledge
proof protocol.

Of course the security and practicality of our protocols
need to be examined further for particular choices of the
non-commutative ring $S$, which is described in as general a
way as possible in this article.  All of them can be broken if an algorithm to
find a specific factorization in a feasible amount of time is
available. However, researchers have been interested in factoring Ore polynomials
for decades, and this problem is in general perceived as very
difficult. As Ore polynomial rings are in general abstractions of operator
algebras, any success of breaking our protocols would lead to a
further understanding of the underlying operators. Thus any
successful attempt to break our protocols, even if it is just for one
special choice of $S$, would benefit several
scientific communities.

Moreover, we note that some of
our proposed schemes bear a strong resemblance to others that have been
known, studied, and withstood general attacks for years, or even
decades.  For example, our digital signature scheme can be viewed
as an analogue of the ElGamal scheme, adapted for non-commutative
rings.  But to the best of our knowledge, no proof of security,
showing that breaking the ElGamal signature scheme is equivalent to
solving the discrete logarithm problem, has yet been found.
Instead, we have the accumulated experience of many cryptographers,
who have so far found no general method of attack that does not
involve computing a discrete logarithm. This appears to be the
basis for accepting the scheme as secure. Now, given the similarity
of our proposal to that of ElGamal, one can plausibly suggest that
ours will also be secure, unless -- again -- some practical method of
factorization is found.
 
Interesting questions for future research are: For which choices of
a ring $S$ of
type (\ref{eq:ourRings}) can one construct an effective attack for the
proposed schemes (possibly using quantum computers)? Note, that the
rings we chose for our examples and for our implementations are among
the simplest ones (Noetherian, bivariate, over finite fields)
that appear to be immune to known attacks. And furthermore, can one improve
the computation of the arithmetics in those rings using
a quantum computer?

We also hope for better implementations in the future for arithmetics
in Ore polynomials, since the existing ones in commodity computer algebra
systems appear to be slow on large examples. This fact forced us to
write our own experimental implementation to evaluate the feasibility
of our proposals.

\section*{Acknowledgements}
We thank Konstantin Ziegler for
his valuable remarks, comments and suggestions on this
paper.

\noindent
We are grateful to Alfred Menezes for the fruitful discussion we had with him.
\bibliography{burgerHeinle}

\section*{Appendix}
\appendix
\setcounter{remark}{0}
    \renewcommand{\theremark}{\Alph{section}\arabic{remark}}

\section{Code-Examples}
\subsection{Factorization of Chebyshev Differential Operators
using \textsc{MAPLE}}
\label{app:Chebyshev}

A few simple \textsc{MAPLE} commands to set up and find
right hand factors of
the Chebyshev differential operator (\ref{eq:chebyshev_L})
of section \ref{subsec:potential-as-a-post-quantum-ryptosystem},
with parameter $n$.
In the code, $\textbf{D} = \partial_{1}$, $\textbf{x} = x_{1}$.

\begin{verbatim}
> with(DEtools):

> _Envdiffopdomain:=[D,x]:

> n := 11;                
                                  n := 11

> L := (1-x^2)*D^2 - x*D + n^2;
                                  2   2
                       L := (1 - x ) D  - x D + 121


> v := DFactorLCLM(L);
                  2        4         6         8         10
          1 - 60 x  + 560 x  - 1792 x  + 2304 x  - 1024 x
v := [D - -------------------------------------------------,
                   3        5        7        9   1024  11
           x - 20 x  + 112 x  - 256 x  + 256 x  - ---- x
                                                   11


                                           3          5          7          9
                 x          -120 x + 2240 x  - 10752 x  + 18432 x  - 10240 x
      D - --------------- - -------------------------------------------------]
          (x - 1) (x + 1)           2        4         6         8         10
                            1 - 60 x  + 560 x  - 1792 x  + 2304 x  - 1024 x

\end{verbatim}
\noindent
Here, the two components of \textbf{v} are the two possible
right hand factors of $L$.  By rerunning the commands with
other (positive integer) values of $n$, the growth in the
factors can be observed.  Other classical operators from the
1800's, such as those of Hermite, Legendre, and Laguerre, have
similar behaviour.

\end{document}